\setlist{nosep}
\newtheorem{claim}{Claim}
\newcolumntype{?}[1]{!{\vrule width #1}}
\definecolor{crimsonglory}{rgb}{0,0,0}
\definecolor{commentcolor}{rgb}{0.75, 0.0, 0.2}
 \newtheorem{lemma}{Lemma}
 \newtheorem{definition}{Definition}
\newif\ifqed
\newtheoremstyle{named}{}{}{\itshape}{}{\bfseries}{.}{.5em}{\thmnote{#3}}
\theoremstyle{named}
\newcounter{proccnt}
\newcommand{\konote}[1]{}
\def\F{\textsf{E}}
\def\Z{\mathcal{Z}}
\def\F{\mathcal{F}}
\newcommand{\eps}{\varepsilon}
\title{Drawing Competitive Districts in Redistricting}
\author{
	 Gabriel Chuang\thanks{Computer Science, Columbia University, New York NY 10027, USA, gtc2117@columbia.edu. Supported by an NSF Graduate Fellowship.} \and Oussama Hanguir\thanks{Lyft, Inc.  Work done while a student at Columbia University. Email: oh2204@columbia.edu.} \and Clifford Stein\thanks{Industrial Engineering and Operations Research, Columbia University, New York, NY 10027, USA. Email: cliff@ieor.columbia.edu.  Research supported in part  by NSF grant  CCF-2218677 and ONR grant ONR-13533312, and by the Wai T. Chang Chair in Industrial Engineering and Operations Research.}
}
\begin{document}
	\newcommand{\ignore}[1]{}
\sloppy
\date{}

\maketitle


\thispagestyle{empty}

\begin{abstract}

In the process of redistricting, one important metric is the number of \emph{competitive districts}, that is, districts where both parties have a reasonable chance of winning a majority of votes. Competitive districts are important for achieving proportionality, responsiveness, and other desirable qualities; some states even directly list competitiveness in their legally-codified districting requirements. In this work, we discuss the problem of drawing plans with at least a fixed number of competitive districts. In addition to the standard, ``vote-band'' measure of competitivenesss (i.e., how close was the last election?), we propose a measure that explicitly considers ``swing voters'' - the segment of the population that may choose to vote either way, or not vote at all, in a given election. We present two main, contrasting results. First, from a computational complexity perspective, we show that the task of drawing plans with competitive districts is NP-hard, even on very natural instances where the districting task itself is easy (e.g., small rectangular grids of population-balanced cells). Second, however, we show that a simple hill-climbing procedure can in practice find districtings on real states in which \emph{all} the districts are competitive. We present the results of the latter on the precinct-level graphs of the U.S. states of North Carolina and Arizona, and discuss trade-offs between competitiveness and other desirable qualities. 
\end{abstract}

\setcounter{page}{1}
\section{Introduction}\label{sec:intro}

In the United States, 
\emph{redistricting} is the task of geographically dividing a state into a fixed number of regions called \emph{districts}, each of which elects one representative to a legislative body (such as the U.S. House of Representatives or a state legislature). 
The process is prone to various types of manipulation, collectively known as \emph{gerrymandering}, in which parties draw districting maps that are optimized for particular outcomes. For example, a party may wish to maximize seats in which their preferred voters constitute a majority, protect their party's incumbents, or force opposition-party incumbents to run against each other. This process often results in many districts that are uncompetitive, i.e., districts in which one party's voters constitute such a large majority that voters are denied any meaningful choice and the winning party is effectively pre-determined.




To combat gerrymandering, many quantitative and qualitative measures capturing various normative criteria have been proposed. These include notions of proportionality \cite{DS2022_RedistrictingProportionality}, responsiveness \cite{N2019_WhatCriteriaShould}, partisan symmetry \cite{W2016_ThreeTestsPractical}, typicality \cite{DDS2020_ComputationalApproachMeasuring}, stability under perturbation \cite{DNW2020_HomologicalPersistenceGerrymandering}, and many others (e.g., \cite{CPRV2019_DeclinationMetricDetect}, \cite{CRSV2022_GeographyElectionOutcome}; see \cite{W2019_ComparisonPartisanGerrymanderingMeasures} for a comparison of several partisan-based measures). 

In this work, we focus on \emph{competitive districts}: those where one expects elections to be close and highly contested, i.e., where the outcomes of future elections are not pre-determined by the geography of the district. There are three reasons we consider competitiveness to be of particular importance: 

\begin{enumerate}
    \item Several jurisdictions in the United States explicitly require competitiveness as a quality that their districting plans must satisfy. Colorado, for example, has a requirement that plans must ``maximize the number of politically competitive districts,'' where competitive is defined as ``having a reasonable potential for the party affiliation of the district’s representative to change at least once between federal decennial censuses'' \cite{CCDMS2022_ColoradoContextCongressional}. See \cite{DDS2020_ComputationalApproachMeasuring} for an overview. 
    \item Having competitive districts is key for the responsiveness of the plan. Responsiveness is a measure of how much a given change in popular vote translates into a change in the proportion of seats held by a particular party. For example, the 14-district Congresional map enacted in North Carolina in 2023 is highly non-responsive: the state could swing 5 points more Democratic and result in zero more Democratic-held seats, or \emph{17} points more Republican and result in only one Republican pickup. This lack of responsiveness is a direct consequence of there being only one competitive seat of the fourteen. 
    \item Competitive elections are generally seen as promoting various positive civic qualities, such as voter engagement, high turnout, close attention to local issues, and others. 
\end{enumerate}

The number of competitive districts in the country has been shrinking rapidly. In 2020, only 45 of 435 districts have a Cook Partisan Voting Index between R+3 and D+3, down from 107 in 1999 \cite{W2023_RealignmentMoreRedistricting}; Wasserman estimates that around half of the lost swing seats are due to changes to district boundaries, rather than ``true'' changes in electoral behavior (e.g. changing voter preferences, geographic polarization, etc.). 

\subsection{Contributions}

Motivated by the desirable qualities of competitive districts, 
we study a version of the districting problem where the goal is to draw maps with at least some fixed number of competitive districts. In addition to the intuitive notion of competitiveness, where a district is considered competitive if recent elections have been decided by close margins (e.g., by 5\% or less), we consider a characterization that relies on separately counting ``swing voters'' -- that is, voters who have a reasonable chance of voting for either party, a formulation that aligns with the Colorado requirement of having districts with a ``reasonable chance of being won by either party''. We show that the problem of maximizing swing districts is NP-hard for both our characterization and the standard model, even on instances where the underlying districting problem is polynomial-time solvable: that is, the hardness comes from the competitive-districts requirement, rather than the inherent hardness of balanced graph partitioning. Despite this, we show that a simple hill-climbing procedure can achieve a very high number of competitive districts without significantly sacrificing compactness, equal population, and other desirable qualities. We demonstrate the results on data from North Carolina and Arizona, showing that it is possible to make every district competitive (although we do not necessarily advocate for doing so).  We also complement the NP-completeness results by giving restrictions that make the problem of maximizing swing districts more tractable.

\section{Related Work}\label{sec:relatedwork}

\paragraph{Competitiveness of districting plans.}

Among the plethora of proposed evaluation metrics for districting plans, Deford et al. \cite{DDS2020_ComputationalApproachMeasuring} take a comprehensive look at various criteria that aim to operationalize competitiveness, including ``evenness'' (how close is the vote share to 50\%?), ``typicality'' (how close is the vote share to the national or statewide average?), and ``vote-band'' metrics (does the vote fall within a fixed percentage of 50\%, or the statewide average?). They observe that ``there is no guarantee that it is even possible to construct a plan with a large number of ... districts [that fall within a given vote-band] while adhering to reasonable compactness and boundary preservation norms,'' an idea we will expand on in this work. 
We will adopt the vote-band metric, because its binary nature allows us to easily express the problem of drawing competitive districts as a decision problem. In addition, we will propose another metric based on swing voters. 

In the same work, Deford et al. \cite{DDS2020_ComputationalApproachMeasuring} also
conduct an extensive ensemble analysis of how many competitive districts arise in ``typical'' districtings, and present two hill-climbing algorithms for optimizing directly for competitive districts. We conduct similar experiments for both vote-band and swing-voter metrics, but use a randomized weighted scheme that incorporates compactness directly via the isoperimetric score. 

Other works that explore the competitiveness of enacted and proposed plans largely use margins in recent elections as their measurement for competitiveness, including \cite{CCDMS2022_ColoradoContextCongressional}, which investigates plans for Colorado, and \cite{HHG2018_GerrymanderingIncumbencyDoes}, which studies whether independent commissions tend to draw more competitive plans.

\paragraph{Computational complexity of redistricting.}
There has also been work exploring hardness of the districting problem in a computational complexity sense, including work on auditing for local deviating groups \cite{KTAM2022_AllPoliticsLocal}, on minimizing the margin of victory in non-geographically-bounded districts \cite{SCDG2019_MinimizingMarginVictory}, and on describing classes of graphs for which various redistricting tasks are NP-hard \cite{IKKO2021_AlgorithmsGerrymanderingGraphs}. Notably, even the balanced graph partition problem (i.e., drawing population-balanced districts, without any other restrictions) is NP-hard, even for planar graphs \cite{A_BalancedGraphPartitioning}. 
Given this, \cite{KMV2019FairRedistrictingHard} show that the problem of drawing districts where each party wins at least $c$ seats is NP-hard \emph{even on instances where valid (contiguous and population-balanced) maps can be found in polynomial time}, by showing a reduction from Var-Linked Planar 3-SAT. We will adopt this structure for our hardness results: we will show reductions that create instances where population-balanced districts can be drawn in polynomial time but drawing competitive districts (both vote-band competitive and swing) is hard.

\section{Preliminaries and Problem Formulation}\label{sec:preliminaries}
\subsection{Voting setting}
Our setting consists of a set of voters distributed over a given geographic area (such as a U.S. state), represented by a (typically planar) graph $G$ with $n$ cells (nodes). The cells are some fixed geographic units (such as counties, precincts, etc.) and edges represent adjacency\footnote{Depending on the jurisdiction, legal requirements may mandate either rook adjacency or queen adjacency.}.

We assume that there are only two parties in consideration, Party \textbf{A} and Party \textbf{B}; every voter may be either a partisan voter or a swing voter. Specifically, each cell $c_i : i \in \{1, \cdots, n\}$ has the following four quantities, all non-negative integers: 
\begin{itemize}
    \item $Pop_i \geq 0$ indicating the total population of the cell, which we abbreviate $Pop_i$; 
    \item $a_i$, the number of voters that vote for Party \textbf{A}; 
    \item $b_i$, the number of voters that vote for Party \textbf{B}; 
    \item optionally, $s_i$, the number of swing voters (who may vote either way). 
\end{itemize}



\subsection{Districtings and Competitive Districts}
For a fixed $d \in \{2, \ldots, n-1\}$, a $d$-\textit{districting} is a partition of the cells of  $G$ into $d$ disjoint subgraphs $D_1, \cdots, D_d$, which we call \emph{districts}. For any district $D_j$, we have:
$$ Party_A(D_j)= A_j=\sum\limits_{c_i \in D_j} a_i\ , \qquad 
Party_B(D_j) = B_j = \sum\limits_{c_i \in D_j} b_i \ ,\qquad 
Swing(D_j) = S_j = \sum\limits_{c_i \in D_j} s_i \ . $$ 
Since every voter is either A, B, or swing, we have $Pop(D_j) = Pop_j = A_j + B_j + S_j$.

In general, a $d$-districting of a graph $G$ is $\varepsilon$-valid if it satisfies the following constraints: 
\begin{enumerate}
    \item[C1.] \emph{Contiguity}, i.e., the subgraph induced by $D_j$ must be connected for each $j=1, \cdots, d$. Thus,  each district is contiguous, which is typically required by law.  
    \item[C2.] $\varepsilon$-\emph{Population-balance}, i.e. $(1-\eps)\left(\frac{Pop(G)}{d}\right) \leq Pop(D_j) \leq (1+\eps)\left(\frac{Pop(G)}{d}\right)$ for each $j=1, \cdots, d$, which ensures that each district has approximately the same population, which is required under the ``One Person, One Vote'' rule \cite{S2014_DemocracyDoorstepStory}. We only consider $\eps < \frac16$, as $\eps\geq\frac16$ allows one district to have \emph{double} the population of another district. Different jurisdictions may require different values of $\varepsilon$. 
\end{enumerate}

\subsubsection{Competitiveness}

We consider two notions that aim to capture the competitiveness of a given district. 

First, we consider the intuitive notion that a district is competitive if the most recent election was decided by a very close margin - for example, 51\% of votes cast for Party A and 49\% cast for Party B. Deford et al. \cite{DDS2020_ComputationalApproachMeasuring} call this ``vote-band'' competitiveness, named for the ``band'' of outcomes (for example, 45-55\%) that the vote share should fall into in order for the election to be considered competitive\footnote{Deford et al. consider both centering the vote band around 50\% and centering it around the statewide or nationwide average. For simplicity, we only discuss the former.}. This notion does \emph{not} depend on counting swing voters separately, so, in this context, 
$s_i = 0$ for all cells $c_i$. 

\begin{definition}
    A district $D_j$ is \emph{$\delta$-Vote-Band Competitive ($\delta$-VBC)} iff $\frac{A_j}{Pop_j}, \frac{B_j}{Pop_j} \in \left[\frac12 - \delta, \frac12 + \delta\right]$.
\end{definition}

An alternative notion is that a district is competitive if the outcome depends on how the swing voters vote. In particular, elections can be both close and uncompetitive. For example, a district comprised of 40\% Party A voters, 40\% Party B voters, and 20\% swing voters (who may vote either way) is likely to be highly competitive, whereas a district with 52\% Party A voters, 47\% Party B voters, and 1\% swing voters is less likely to be competitive. 

\begin{definition}
    A district $D_j$ is \emph{Swing} if $S_j \geq |A_j - B_j|$.
\end{definition}

This formulation more directly captures the notion that districts are competitive if there is a reasonable chance for either party to win. However, evaluation of this metric depends on having a reasonable estimate of the number of such voters; we discuss using statistical methods of Ecological Inference for this task in Appendix \ref{apdx:EI}. In particular, while polling data is valuable for estimating voter transitions, we need this information at a precinct-by-precinct level to do redistricting analysis, and polls are mostly conducted at a national or statewide level. 

This formulation also allows us to capture the fact that, due to partisan polarization, elections in the US are often determined by turnout, rather than vote-switching \cite{HG2010_EstimatingElectoralEffects}. For example, we can count ``unreliable partisan voters'' (those who will either vote for Party A or stay home) as half of a ``true'' swing voter (those who may vote for Party A or Party B), as their decision will affect the eventual vote margin in their precinct, district, or state by half as much.

\subsection{Maximizing Competitive Districts}

\begin{definition}[$\delta$-VBC-Max]
    Given a graph $G$ with $n$ cells $c_1, \ldots, c_n$, $d \in [2,n-1]$, $\varepsilon > 0$, and $\delta>0$, the \emph{$\delta$-VBC District Maximization Problem} is to compute a $\varepsilon$-valid $d$-districting that maximizes the number of $\delta$-VBC districts.
\end{definition}

\begin{definition}[Swing-Max]
    Given a graph $G$ with $n$ cells $c_1, \ldots, c_n$, $d \in [2,n-1]$ and $\varepsilon>0$, the \emph{Swing District Maximization Problem} is to compute a $\varepsilon$-valid $d$-districting that maximizes the number of swing districts.
\end{definition}
\section{Hardness results}
Our main results in this section are to show that the $\delta$-VBC District Maximization Problem and the Swing District Maximization Problem are both NP-hard. On its own, deciding the existence of any $\varepsilon$-valid $d$-districting for a graph $G$ is NP-hard, regardless of competitive districts. Therefore, we show that both $\delta$-VBC-Max and Swing-Max are hard, even on instances where: 
\begin{itemize}
    \item the underlying graph $G$ is a grid,
    \item the number of districts $d$ is 2, and
    \item there exists a polynomial-time computable $\varepsilon$-valid $d$-districting. 
\end{itemize} 
This setup is similar that of Kueng et al. \cite{KMV2019FairRedistrictingHard}, who show that drawing districtings where both parties have at least $c$ seats is hard even on instances where balanced-population districting is easy. 

\begin{restatable}[Vote-Band Hardness]{thm}{VBCHard}
    For all positive $\eps < \frac16, \delta < \frac12$, the $\delta$-VBC $\eps$-valid District Maximization Problem ($\delta$-VBC-Max($\eps$)) is NP-hard, even on instaces where $G$ is a grid, $d=2$, and $\eps$-valid districtings (that are not $\delta$-VBC-maximal) are poly-time computable.  
    \label{thm:VBC-hard}
\end{restatable}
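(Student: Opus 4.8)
The plan is to give a polynomial-time reduction from a partition-style numerical NP-hard problem (e.g.\ \textsc{Partition}/\textsc{Subset-Sum}, which also underlies the planar-3SAT route of Kueng et al.\ \cite{KMV2019FairRedistrictingHard}) to the decision version of $\delta$-VBC-Max with $d=2$: ``does there exist an $\eps$-valid $2$-districting with two $\delta$-VBC districts?'' The first simplification I would exploit is specific to $d=2$. If the instance is built so that the totals are partisan-balanced ($\sum_i a_i = \sum_i b_i = \tfrac12 Pop(G)$) and an exact population split ($Pop(D_1)=Pop(D_2)=\tfrac12 Pop(G)$) is available, then $A_2 = B_1$ and $B_2 = A_1$, so $D_2$ is the partisan mirror of $D_1$; the two vote shares are reflections of each other about $\tfrac12$, and hence $D_1$ is $\delta$-VBC iff $D_2$ is. Thus maximizing VBC districts collapses to a single-region question: is there a contiguous, half-population region whose partisan imbalance $|A_1-B_1|$ is small enough to place its vote share inside $[\tfrac12-\delta,\tfrac12+\delta]$?

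Next I would design the grid so that this single-region question encodes the source instance. The grid is assembled from three kinds of cells: partisan-neutral \emph{filler} (equal $a_i,b_i$) used only to carry population and keep both districts connected; a large \emph{A-reservoir} and \emph{B-reservoir}; and a gadget region that realizes the combinatorial choices. Each binary choice of the source instance is implemented as a local detour of the dividing curve around a balanced-population but partisan-charged cell (or cell pair), so that flipping the choice moves a fixed amount of partisan mass from one district to the other \emph{without} changing either district's population or breaking contiguity. Calibrating the charges against the numerical targets, the reduction guarantees that perfect partisan balance ($A_1=B_1$, vote share exactly $\tfrac12$) is attainable iff the source instance is a YES instance. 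To certify the ``easy valid districting'' clause of the theorem, I would also point to a fixed straight cut that splits the symmetric grid into two equal-population, contiguous halves; this cut is computable in polynomial time (indeed, by inspection), though it need not be $\delta$-VBC.

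Two points will require the most care. The first is purely gadget engineering: I must verify that \emph{every} combination of choices leaves both districts contiguous and exactly population-balanced, that the toggles are genuinely independent, and that there are no unintended balanced splits that satisfy the competitiveness requirement ``for free.'' The second, and the real obstacle, is covering the entire parameter range $\delta<\tfrac12$ (and $\eps<\tfrac16$). Because $\delta$ may be arbitrarily close to $\tfrac12$, a NO instance must force \emph{some} district to have a vote share bounded away from $\tfrac12$ by more than $\delta$, and such gross imbalance cannot be produced by the small ``near-miss'' of an unsatisfiable partition instance. I would therefore build the gadget as a \emph{combination lock}: only a genuine solution of the source instance permits the dividing curve to thread through the gadget in a way that exchanges reservoir mass between the two districts, and any failure traps essentially an entire reservoir inside one district, driving its vote share toward $0$ or $1$ and hence outside the band for every $\delta<\tfrac12$. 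Finally, I would check that the $\eps$ population slack cannot be spent to bypass the lock (for instance, by shaving a reservoir), which is precisely what pins down the admissible scaling of the filler and charge weights.
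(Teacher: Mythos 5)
There is a genuine gap, and it sits exactly where you flagged ``the real obstacle'': the combination-lock gadget cannot exist in the form you need. Your setup places the statewide composition at perfect balance ($\sum_i a_i = \sum_i b_i$), i.e., at the \emph{center} of the vote band, so the VBC constraint is maximally slack. In a NO instance of \textsc{Partition}/\textsc{Subset-Sum}, every cut through your gadget realizes a near-miss, and a near-miss is still far inside the band: if the source numbers are scaled by $M$, the guaranteed imbalance of any cut is only about $M$ in absolute terms, while the district population is at least $M\sum_i|t_i|$, so the vote-share deviation from $\tfrac12$ is $O\!\left(1/\sum_i|t_i|\right)$. Since \textsc{Subset-Sum} is only hard when the numbers are super-polynomially large (otherwise pseudo-polynomial DP solves it), this deviation is vanishingly small on every hard instance, and the NO instances still admit $\delta$-VBC districtings for \emph{every} fixed $\delta>0$ --- not just $\delta$ near $\tfrac12$. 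The lock cannot repair this, because the only tools it has are contiguity and population balance, and both are blind to the partisan labels: a cut realizing an exact solution and a cut realizing a near-miss place exactly the same cells on each side and are geometrically indistinguishable, so no arrangement of filler and reservoirs can ``trap a reservoir'' in one case but not the other. Geometry cannot be made to respond to the arithmetic of the sum, yet that response is precisely what your NO-case analysis requires.

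The paper's proof resolves this with one move that your proposal is missing, and it is worth internalizing: instead of centering the statewide share at $\tfrac12$, it pins the statewide Party-A share at the \emph{edge} of the band, exactly $\tfrac12+\delta$. Then in any $2$-districting where both districts lie in the band, neither district may exceed A-share $\tfrac12+\delta$, while their population-weighted average equals $\tfrac12+\delta$; hence both districts must have A-share \emph{exactly} $\tfrac12+\delta$. This converts the interval (band) constraint into an exact equality constraint, which is then equivalent to an exact fixed-proportion subset problem (the paper's FPSP, shown \textsc{Subset-Sum}-hard), with a $3\times n$ grid whose two heavy corner cells force the population split and whose middle row carries the FPSP bins. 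Your mirror-symmetry observation and your polynomial-time ``straight cut'' witness are both fine, but without the band-edge trick (or some other mechanism converting the band into an equality), the reduction as sketched proves nothing: YES and NO instances of the source problem both map to YES instances of $\delta$-VBC-Max.
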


\begin{restatable}[Swing Hardness]{thm}{SwingMaxHard}
    For all positive $\eps < \frac16$, the Swing $\eps$-valid District Maximization Problem (Swing-Max($\eps$)) is NP-hard, even on instaces where $G$ is a grid, $d=2$, and $\eps$-valid districtings (that are not Swing-maximal) are poly-time computable.\footnote{This constraint of $\eps < \frac16$ is extremely lenient: in practice, most districtings have population balance under $1\%$, and $\eps=\frac16$ would allow one district to have \emph{double} the population of another - which is certainly illegal.}
    \label{thm:SwingMax-hard}
\end{restatable}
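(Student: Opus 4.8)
The plan is to reduce from PARTITION: given positive even integers $w_1,\dots,w_k$ with $W=\sum_i w_i$, decide whether some subset sums to $W/2$. I would build, in polynomial time, a grid instance of Swing-Max with $d=2$ whose optimum equals $2$ (both districts swing) exactly when the PARTITION instance is a yes-instance, and whose optimum is $1$ otherwise; since Swing-Max is an optimization problem this gives NP-hardness of the associated decision problem.

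The heart of the reduction is a swing-accounting identity that I would set up as follows. The grid has two dominant ``anchor'' regions, $T$ and $B$, destined for opposite districts, together with a collection of $k$ freely-assignable ``item'' cells, one per weight. Each anchor is politically neutral, $A_T=B_T$ and $A_B=B_B$, and carries exactly half of all swing voters, $S_T=S_B=W/2$; each item cell $i$ carries no swing voters and satisfies $a_i-b_i=w_i$ (e.g.\ $a_i=w_i,\ b_i=0$). If $U$ denotes the total weight of the items placed in the district containing $T$, then $A_1-B_1=U$ and $A_2-B_2=W-U$, while $S_1=S_2=W/2$. Hence district $1$ is swing iff $U\le W/2$ and district $2$ is swing iff $W-U\le W/2$, so \emph{both} are swing iff $U=W/2$, i.e.\ iff the up-items form a subset summing to $W/2$. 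Putting all items with one anchor shows at least one district is always swing, so the optimum is $2$ exactly on yes-instances and $1$ otherwise.

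For the grid and connectivity requirements I would place the $k$ item cells in a single middle layer so that each item is adjacent both to a cell of $T$ (above) and a cell of $B$ (below); then any assignment of items to the two sides keeps both districts connected, because $T$ together with its chosen items stays connected (each up-item touches $T$), and symmetrically for $B$. To meet the theorem's hypothesis that some $\eps$-valid districting is poly-time computable, I would make the item populations tiny and the anchor populations dominant and equal, so that the canonical top/bottom districting---with, say, $\lceil k/2\rceil$ items assigned up---is $\eps$-population-balanced and trivially computable, independent of any swing considerations.

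The step I expect to be the main obstacle is controlling the \emph{entire} space of $\eps$-valid connected $2$-districtings, not just the intended canonical ones: a rectangular grid admits many connected balanced partitions (vertical and wiggly cuts) that split the anchors between the two districts, and I must ensure none of these yields two swing districts unless the PARTITION instance is solvable. I would handle this by choosing the anchor cell populations to be both dominant---so population balance forces $T$ and $B$ into opposite districts (any district containing both would have population near $Pop$, violating $\eps<\tfrac16$)---and ``rigid,'' e.g.\ with column/prefix populations chosen to avoid the balance window around $Pop/2$, so that every $\eps$-valid districting is forced into the canonical anchor/item form where the only freedom is the item assignment. Verifying that this population choice rules out all non-canonical connected cuts (or that any such cut necessarily violates one of the two now-tight swing constraints $S_j\ge|A_j-B_j|$) is the delicate part, and is exactly where the bound $\eps<\tfrac16$ is used.
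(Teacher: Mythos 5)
Your reduction has the same skeleton as the paper's proof: the paper also reduces from a number-partitioning problem (Subset Sum, via an intermediate ``fixed-proportion subset'' problem), also builds a three-row grid in which two population-dominant, politically neutral anchors are forced into opposite districts, also lets the item cells be assigned freely thanks to zero-population filler cells, and also uses exactly your accounting identity, whereby requiring \emph{both} districts to be swing pins the item split to be exact. The gap is in the step you yourself flag as delicate, and as proposed your fix does not work. If the anchors are genuinely multi-cell regions whose population (and hence swing) is spread over many light cells, then no choice of ``rigid'' column or prefix populations can exclude anchor-splitting cuts, because such cuts can be population-balanced \emph{symmetrically}: a connected district containing, say, $55\%$ of $T$'s cells and $45\%$ of $B$'s cells has population $\approx Pop/2$ and satisfies C2 for every $\eps > 0$, so population balance is simply silent about them. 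Worse, these cuts break your identity $S_1 = S_2 = W/2$: when the anchors' swing voters travel with their population, $S_1$ becomes a tunable quantity ranging over a window of width roughly $2\eps W$ around $W/2$, and ``both districts swing'' then requires only that \emph{some} subset sum equal the chosen $S_1$, i.e.\ land in that window, rather than equal $W/2$ exactly. Concretely, for the PARTITION no-instance $w = (2,2,2,4,4)$ (all subset sums are even, but $W/2 = 7$), the subset sums $6$ and $8$ lie inside the window whenever $\eps \geq \frac{1}{14}$, so a cut giving district~1 the items summing to $6$ together with $\frac{6}{7}$ of the anchor swing produces two swing districts; your claimed equivalence ``optimum $=2$ iff yes-instance'' fails.

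The repair is to make the anchors \emph{atomic}, which is what the paper does: each anchor is a \emph{single cell} carrying all of its population, its neutral voters, and its $W/2$ swing voters, and every remaining non-item cell has zero population (existing only to provide connectivity). A single cell cannot be split; the two anchor cells must lie in different districts because their combined population already exceeds $\left(\frac12+\eps\right)Pop$ (this is the only place $\eps < \frac16$ is needed); and since cells outside the anchors and items carry no voters, the swing accounting depends \emph{only} on the item assignment, no matter how wiggly the cut is --- so there is nothing left to verify about non-canonical districtings. (More generally, it suffices that each anchor consist of a constant number of cells each of population greater than $\eps \cdot Pop$, so that the number of anchor cells per district is pinned exactly; single cells are the cleanest instance of this.) With that one change, your argument goes through and is essentially the paper's proof, with PARTITION playing the role the paper assigns to its fixed-proportion subset problem.
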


We will prove both of these via a reduction from Subset Sum \cite{KT2006_AlgorithmDesign}. Specifically, we will reduce through an intermediate problem: 

\begin{definition}[FPSP($\delta$)]
    For any fixed positive $\delta < \frac12$, the \emph{Fixed-Proportion Subset Problem (FPSP)} takes a list of tuples $S = [(a_1, b_1) \cdots (a_n, b_n)]$, where: 
    \[ \sum_{i=1}^n a_i = \left(\frac12+\delta\right)\sum_{i=1}^n (a_i + b_i) \qquad \text{ and } \qquad a_1 + b_1 = a_2 + b_2 = \cdots = a_n + b_n \]
    The task is to find nonempty proper subsets $S_1, S_2\subsetneq [n]$ that partition $[n]$, such that 
    \[\sum_{i \in S_1} a_i = \left(\frac12+\delta\right) \sum_{i \in S_1} (a_i + b_i) \qquad \text{and} \qquad \sum_{i \in S_2} a_i = \left(\frac12+\delta\right) \sum_{i \in S_2} (a_i + b_i). \footnote{The existence of any $S_1$ satisfying the first condition is sufficient to have $S_2 = [n] \setminus S_1$ satisfy the second condition, but we explicitly write both for clarity.} \]
\end{definition}

\begin{figure}
    \centering 
    \includegraphics*[width=0.9\textwidth]{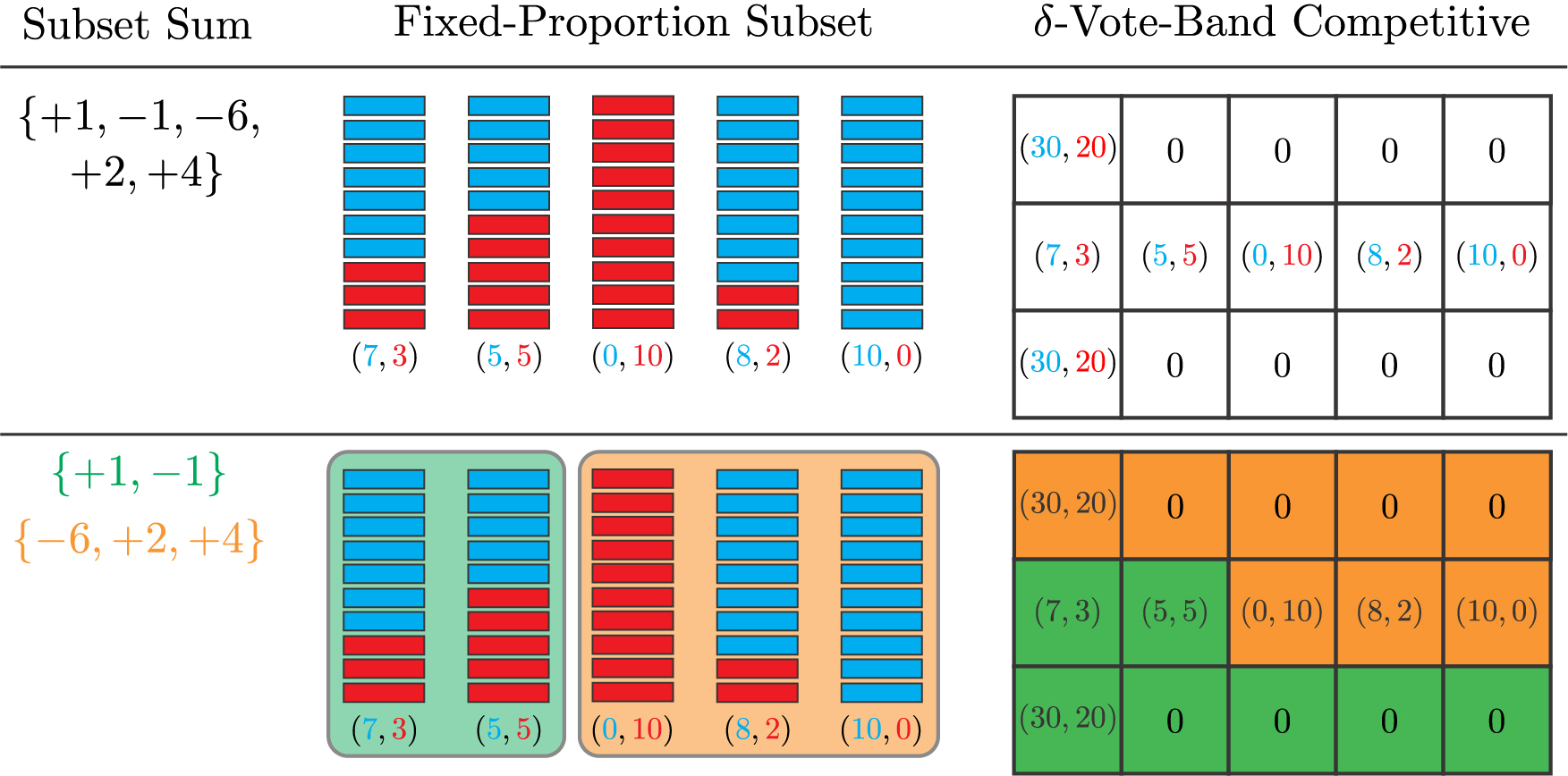}
    \caption{The chain of reductions from Subset Sum to FPSP to $\delta$-VBC-Max, for $\delta = 0.1, \eps=\frac16$. Given the Subset Sum instance $T$ (left), we construct an instance of FPSP (center) where bin $i$ has {\color{Cerulean} $6+t_i$ type-A units} and {\color{Red} $4-t_i$ type-B units}. This induces a districting instance (right), where the central row has {\color{Cerulean} Party A} and {\color{Red} Party B} voters corresponding to type-A and type-B units. The {\color{Green} green} and {\color{Orange} orange} districts (where each district has population in $(\frac12 \pm \eps)pop_{\text{total}}$ and exactly 60\% Party A voters) correspond to a FPSP partition (where each partition has exactly 60\% type-A units), which in turn corresponds to a solution to the original Subset Sum instance.} 
    \label{fig:reduction}
\end{figure}

Intuitively, the Fixed-Proportion Subset Problem asks to partition a population of units with $(\frac12 + \delta)$ type-$a$ units and $(\frac12 - \delta)$ type-$b$ units, which are grouped into equal-size bins, into two subsets, where each subset is also exactly $(\frac12 + \delta)$ type-$a$ and $(\frac12 - \delta)$ type-$b$. 

\begin{restatable}[FPSP Hardness]{thm}{FPSPHard}
    The Fixed-Proportion Subset Problem is NP-hard for all $\delta \in [0, \frac12)$. 
    \label{thm:FPSP-hard}
\end{restatable}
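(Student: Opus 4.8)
The plan is to reduce from Subset Sum, following the chain advertised in the excerpt, but the key first move is to rewrite FPSP in a ``centered'' form that strips away the proportion $\frac12+\delta$ entirely. Since every bin has the same total $m := a_i + b_i$, define $t_i := a_i - (\frac12+\delta)m$, the signed deviation of bin $i$ from the target type-$a$ count. The global hypothesis $\sum_i a_i = (\frac12+\delta)\sum_i(a_i+b_i) = (\frac12+\delta)nm$ is then exactly $\sum_i t_i = 0$, and for any $S_1 \subseteq [n]$ the condition $\sum_{i\in S_1} a_i = (\frac12+\delta)|S_1|m$ is exactly $\sum_{i \in S_1} t_i = 0$. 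Because the whole list sums to zero, $\sum_{i \in S_1} t_i = 0$ automatically forces $\sum_{i \in S_2} t_i = 0$ for $S_2 = [n]\setminus S_1$, recovering the footnote's observation. Thus FPSP($\delta$) is equivalent to: given integers $t_1,\dots,t_n$ with $\sum_i t_i = 0$, decide whether some nonempty proper $S_1 \subsetneq [n]$ has $\sum_{i \in S_1} t_i = 0$. (A valid FPSP instance with positive population forces $\frac12+\delta = \frac{\sum a_i}{\sum(a_i+b_i)}$ to be rational, so I take $\delta$ rational throughout; this is the only case relevant to the application.)

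Next I would build the reduction. Take a Subset Sum instance of positive integers $w_1,\dots,w_n$ with target $W$; the cases $W \le 0$ and $W \ge \Sigma$, where $\Sigma := \sum_j w_j$, are decidable directly, so assume $0 < W < \Sigma$. Set $t_i := w_i$ for $i \in [n]$ and append two ``balancing'' elements $t_{n+1} := -W$ and $t_{n+2} := W - \Sigma$, so that $\sum_{i=1}^{n+2} t_i = 0$ as required. The claim is that this centered instance has a nonempty proper zero-sum subset iff the Subset Sum instance is a yes-instance. For the forward direction, a solution $I \subseteq [n]$ with $\sum_{i\in I} w_i = W$ yields the zero-sum set $I \cup \{n+1\}$, which is nonempty and omits $n+2$, hence proper.

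The converse is where the real care is needed and is the main obstacle: I must rule out ``spurious'' zero-sum subsets via a case analysis on which balancing elements $S_1$ contains. If $S_1 \subseteq [n]$ is nonempty, its sum is a positive sum of $w_i$'s and cannot vanish; if $S_1 \ni n+1$ but $S_1 \not\ni n+2$, then $\sum_{i \in S_1 \cap [n]} w_i = W$ is a genuine solution; if $S_1 \ni n+2$ but $S_1 \not\ni n+1$, then $\sum_{i \in [n]\setminus S_1} w_i = W$ exhibits a solution on the complement; and if $S_1$ contains both, its value is $-\sum_{i \in [n]\setminus S_1} w_i$, which vanishes only when $S_1 \supseteq [n]$, i.e. $S_1 = [n+2]$, contradicting properness. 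Hence every admissible zero-sum subset certifies a Subset Sum solution, completing the equivalence.

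Finally I would verify realizability and polynomiality, turning the $t_i$ back into legal bins. Writing $\delta = p/q$ in lowest terms, I choose the common bin size $m$ to be a multiple of $2q$ (so that $(\frac12\pm\delta)m \in \mathbb{Z}$) and large enough that $(\frac12-\delta)m \ge \Sigma \ge \max_i |t_i|$. Then $a_i := (\frac12+\delta)m + t_i$ and $b_i := (\frac12-\delta)m - t_i$ are nonnegative integers summing to $m$, so $[(a_1,b_1),\dots,(a_{n+2},b_{n+2})]$ is a valid FPSP($\delta$) instance, and nonnegativity holds because $a_i \ge 2\delta m \ge 0$ and $b_i \ge (\frac12-\delta)m - \max_i|t_i| \ge 0$. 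Since $m = O\!\left(\Sigma/(\tfrac12-\delta)\right)$ with $\delta$ fixed, every coordinate is polynomially bounded and the construction runs in polynomial time, establishing NP-hardness of FPSP($\delta$) for every (rational) $\delta \in [0,\frac12)$.
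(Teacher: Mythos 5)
Your proof is correct, and it follows a route that overlaps with the paper's but is not the same. Both arguments turn on the identical centering trick---writing $a_i = \left(\tfrac12+\delta\right)m + t_i$ and $b_i = \left(\tfrac12-\delta\right)m - t_i$ so that FPSP becomes a zero-sum-subset question---but the paper reduces from the \emph{signed} variant of Subset Sum (entries may be negative, target $0$), citing its NP-hardness, appends a \emph{single} balancing element $t_n = -\sum_{i<n} t_i$, and is then essentially done: one side of any FPSP solution avoids index $n$, and that side is a zero-sum subset of the original signed instance. You instead start from vanilla Subset Sum with positive integers and target $W$, append \emph{two} balancing elements $-W$ and $W-\Sigma$, and must then carry out the four-way case analysis to exclude spurious zero-sum subsets; this makes your argument longer but self-contained, since you in effect re-prove the hardness of the signed target-$0$ variant rather than invoking it. Your realizability step is actually \emph{more} careful than the paper's: the paper sets $c = -\min_i t_i/\left(\tfrac12+\delta\right)$, which guarantees $a_i \ge 0$ but not $b_i = \left(\tfrac12-\delta\right)c - t_i \ge 0$ (this fails whenever $\max_i t_i > \frac{1/2-\delta}{1/2+\delta}\left(-\min_i t_i\right)$, e.g.\ on $T' = [10,-1,-9]$ even with $\delta = 0$), and it does not address integrality of the bin entries; your choice of rational $\delta = p/q$ with bin size $m$ a multiple of $2q$ satisfying $\left(\tfrac12-\delta\right)m \ge \Sigma$ handles both points, and restricting to rational $\delta$ is harmless since nontrivial integer FPSP instances exist only in that case. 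In short: the paper's proof buys brevity by outsourcing the combinatorial core to a known hardness result, while yours buys self-containment and a watertight construction at the cost of the case analysis.
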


We prove Theorem~\ref{thm:FPSP-hard} in Appendix~\ref{apdx:proofs}, using a reduction from Subset Sum. 

We prove Theorem \ref{thm:VBC-hard} via a reduction from FPSP($\delta$) to $\delta$-VBC-Max.

\begin{proof} Fix $\eps$ and $\delta$. Let $S = [(a_1, b_1) \cdots (a_n, b_n)]$ be an instance of FPSP($\delta$), and let $Z = \sum_{i=1}^n (a_i + b_i)$. Our redistricting instance (``state'') for $\delta$-VBC-Max($\eps$) will be a $3\times n$ grid of cells $C_{i,j}$, where: 
\begin{itemize}
    \item The top left and bottom left cells $C \in \{C_{1,1}, C_{3,1}\}$ will each have
    \[pop(C) = \frac12 P, \qquad Party_A(C) = \left(\frac12 + \delta\right)\frac12 P, \qquad  Party_B(C) = \left(\frac12 - \delta\right)\frac12 P\ ,\]
    where $P= \frac{\frac12 + \eps}{\frac12 - \eps} Z$. 
    \item All other cells in the first and third row have zero population\footnote{One can easily modify the reduction to have all cells have nonzero population by multiplying all other cells by $2(n+2)+1$ and having the first- and third-row cells have $pop(C) = 2, Party_A(C) = 1, Party_B(C) = 1$.}: $pop(C_{1,\cdot}) = pop(C_{3, \cdot}) = 0$.
    \item For each cell $C_{2,i}$ in the second row (for $i=1, 2, \cdots n$), 
    \[pop(C_{2,i}) = (a_i + b_i) \qquad Party_A(C_{2,i}) = a_i  \qquad Party_B(C_{2, i}) = b_i\ . \]
\end{itemize} 

The total population of the state is 
$pop_{total} = Z+P = \frac{\frac12 - \eps + \frac12 + \eps}{\left(\frac12 - \eps\right)}Z = \frac{1}{\frac12 - \eps}Z\ $.

\textbf{Observation 0}: Since all cells $C_{2, 1\cdots n}$ have the same population, it is trivial to draw an $\eps$-valid districting: one can assign the first row and half of the second row to $D_1$, and the rest to $D_2$. That is, one can district this instance in polynomial time, if competitiveness is not considered.

\textbf{Observation 1}: The two left ``corner'' cells $C_{1,1}$ and $C_{3,1}$ must be assigned to different districts. If they were assigned the same district $D_i$ (along with at least one center-row cell $c_j$, required for connectivity), that district's population would exceed $\left(\frac12 + \eps\right)pop_{total}$:
\[ pop(C_{1,1}) + pop(C_{3,1}) + pop(C_{2,j}) = P + (a_j+b_j) > P = \frac{\frac12 + \eps}{\frac12 - \eps}Z =\left(\frac12+\eps\right)pop_{total}.\]

\textbf{Observation 2}: Each district must contain at least one of the center-row cells $C_{2,i}$; a ``corner'' cell alone is under the population bounds. 

Besides these two cells, the structure of the state allows all other second-row cells to be assigned to either district while respecting contiguity. For example, one can assign all cells in the first row to district 1 and all cells in the third row to district 2.

\textbf{Observation 3}: $\frac12+\delta$ of the overall population is Party A voters: 
\[\text{total }Party_A= \sum_{i=1}^n a_i + \left(\frac12+\delta\right)\frac{\frac12+\eps}{\frac12-\eps}Z
= \left(\frac12 + \delta\right)Z + \left(\frac12+\delta\right)\frac{\frac12+\eps}{\frac12-\eps}Z
 = \left(\frac12 + \delta\right)pop_{total}, \]
where the last equality holds because $pop_{total} = \frac{1}{\frac12-\eps}Z$. 

Let $\{D_1, D_2\}$ be a $\epsilon$-valid $\delta$-Vote-Band-Competitive districting on this state. 
We claim that $S_1 = \{i : C_{2, i} \in D_1\}, S_2 = \{i : C_{2, i}\in D_2\}$ is a valid solution to the FPSP($\eps$) instance. 

For $S_1, S_2$ to be a solution to FPSP($\eps$), we must show (a) both are nonempty (which follows from Observation 2 above), and (b) $\sum_{i \in S_1} a_i = \left(\frac12 + \delta\right) \sum_{i \in S_1} (a_i + b_i)$.

Since both districts' margins fall in the $\frac12 \pm \delta$ vote band, each one must have \emph{exactly} $\frac12 + \delta$ Party A voters; if (for example) $D_1$ had $<\frac12 + \delta$ Party A voters, then $D_2$ would end up with $>\frac12 + \delta$ Party A voters, falling outside the vote band. So, 
$$
    \frac12 + \delta  = \frac{Party_A(D_1)}{Party_A(D_1) + Party_B(D_1)} 
    = \frac{\left(\frac12 + \delta\right) \frac12 P + \sum_{i \in S_1} a_i}{\frac12 P + \sum_{i \in S_1} (a_i + b_i)} \ . $$
    Solving this equation for $\sum_{i \in S_1} a_i$, yields
    $ \sum_{i \in S_1} a_i = \left(\frac12 + \delta\right) {\sum_{i \in S_1} (a_i + b_i)} $
as required. Thus, $S_1, S_2$ are a valid solution to the FPSP instance. 
\end{proof}

The chain of reductions, using the Subset Sum instance $T=\{1, -1, -6, 2, 4\}$ as an example, is shown in Fig.~\ref{fig:reduction}. 

The proof for Swing district maximization is nearly identical. We present it in Appendix~\ref{apdx:proof_sw}.

\begin{restatable}[Hardness for more than two districts]{cor}{MoreDistricts}
    For any $k,d$ with $2 \leq k \leq d$, it is NP-hard to generate an $\eps$-valid $d$-districting plan where at least $k$ districts are competitive (for either $\delta$-VBC or SWING). 
    \label{cor:dgeq2}
\end{restatable}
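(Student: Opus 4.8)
The plan is to reduce from the two-district hardness of Theorems~\ref{thm:VBC-hard} and~\ref{thm:SwingMax-hard}. When $d=2$ the number of competitive districts is at most $2$, and the reduction from FPSP in fact shows that deciding whether \emph{both} districts can be made competitive (equivalently, whether the maximum equals $2$) is NP-hard. I would lift this to ``at least $k$ of $d$'' by padding a hard two-district instance, which I call the \emph{core}, with $d-2$ gadget districts whose competitiveness is fixed in advance, independently of the districting chosen.

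Concretely, let the core have total population $Q$ and write $M=Q/2$, so that in any valid two-districting each core district has population in $[(1-\eps)M,(1+\eps)M]$. I would introduce $k-2$ \emph{always-competitive} gadget cells and $d-k$ \emph{never-competitive} gadget cells, each of population exactly $M$. An always-competitive cell is internally balanced, with $Party_A=Party_B=\tfrac12 M$ for $\delta$-VBC and $S\ge|A-B|$ (e.g.\ $A=B$, or all swing voters) for Swing, so it is competitive under either definition for every admissible $\delta$; a never-competitive cell is pure Party~A, with $A=M$ and $B=S=0$, so it is neither $\delta$-VBC (for $\delta<\tfrac12$) nor Swing. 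The total population becomes $dM$, making the per-district target exactly $M$, and a routine rescaling of the core makes every quantity an integer.

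The crux is to show that every $\eps$-valid $d$-districting is \emph{rigid}: each gadget cell is its own district and the core receives exactly two districts. The cleanest way is to make each gadget cell a separate connected component of $G$; a one-cell component can be neither split nor merged, so it is forced to be a single district, the $d-2$ gadgets consume $d-2$ districts, and the remaining two fall on the core, which therefore restricts to a valid two-districting. Balance is automatic, since each district then has population $M$ (gadgets) or lies in the core's band $[(1-\eps)M,(1+\eps)M]$, which coincides with the band for $G$. If one wants $G$ connected or planar, I would instead attach every gadget through a zero-population cell adjacent only to a corner cell of the core (the corners have population $\tfrac12 P$); since $\eps<\tfrac16$ forces $2M>(1+\eps)M$, no district holds two gadget cells, and a gadget district has at most $\eps M$ population to spare, which — because $\tfrac12 P>\eps M$ — is never enough to reach a voter-bearing core cell. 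This is the step I expect to be the main obstacle: ruling out leakage of core population into the gadget districts.

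Granting rigidity, the number of competitive districts in any plan for $G$ equals $(k-2)$ plus the number of competitive districts of the induced core two-districting. Hence $G$ admits a plan with at least $k$ competitive districts if and only if the core admits a two-districting with both districts competitive, which is the NP-hard decision problem underlying Theorem~\ref{thm:VBC-hard} (respectively Theorem~\ref{thm:SwingMax-hard}); both directions of the correspondence are clearly polynomial. Because the gadget cells behave identically under the vote-band and swing notions, the same construction settles both cases simultaneously.
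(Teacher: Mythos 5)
Your proposal is correct and takes essentially the same route as the paper's proof: pad the hard $d=2$ instance with $d-2$ cells of district-scale population, of which $k-2$ are internally balanced (hence competitive under both notions) and $d-k$ are pure Party~A, so that these cells are forced to be singleton districts and the remaining two districts must form a competitive two-districting of the original instance. Your version is in fact somewhat more careful than the paper's one-paragraph argument: you pin down the gadget population as exactly the ideal district size (avoiding the self-referential ``maximal allowable population'' issue when $d$ is large) and you explicitly handle contiguity, either via separate components or via zero-population connectors attached at the heavy corner cells, details the paper leaves implicit.
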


The proof for Corollary \ref{cor:dgeq2} is given in Appendix~\ref{apdx:proof_dgeq2}. 

\textbf{Algorithms for Special Cases.} In Appendix~\ref{apdx:poly}, we present some special cases of graphs which admit polynomial time algorithms. Although these cases are somewhat artificial if expressed as standard redistricting instances, they may be of interest to those interested in districting-flavored problems on constrained structures, such as road networks or low-population grids. 

\subsection{Discussion of Complexity Results} 
First, we re-emphasize that we have shown that the hardness is a result of the competitive district maximization, not the intrinsic hardness of the balanced partition problem, because the instances we construct can be partitioned to population balance $\eps=0$ trivially. 

There are reasons to see this reduction as somewhat more ``natural'' than previous hardness results, which often construct instances with strange features (for example, \cite{KMV2019FairRedistrictingHard} constructs instances with that have long, tendril-like geography with many holes; \cite{SCDG2019_MinimizingMarginVictory} deals with a setting without geographic contiguity requirements; \cite{KTAM2022_AllPoliticsLocal} creates instances where half of the nodes are connected to only one other node (i.e., there are many ``donut'' precincts, where one precinct entirely surrounds another)). In contrast, we construct instances where the underlying graph is a rectangular grid; many states (especially in the Midwest) have precinct graphs that have this approximate structure. 

On the other hand, our reduction relies  heavily on the hard cutoff at the edge of a desired ``vote band'' -- that is, the hardness comes from distinguishing, in a binary way, between a district that is 59.99\% Party A and one that is 60.01\% Party A. In practice, the difference is likely to be negligible. Therefore, one may hope that we can, in practice, draw maps that significantly increase the number of competitive districts. We explain our approach to doing so in the subsequent section. 

\section{Algorithms, Heuristics, and Experiment Setup}
Given the hardness of drawing competitive districting plans (as measured both in vote-band and swing-voter flavors), even on fairly constrained and natural varieties of graphs, one may wonder whether it is tractable to draw such districtings on real-world graphs. In this section, we argue that the answer is a definitive \emph{yes}. In particular, we find that very simple hill-climbing procedure can yield reasonable districting plans that are highly competitive.

We propose a heuristic hill-climbing procedure based on making local moves called ``single node flips'', and run several experiments on the U.S. states of North Carolina and Arizona: one set seeking to maximize $\delta$-Vote-Band-Competitive districts for $\delta = 0.1, 0.05$ (corresponding to the thresholds used by the Center for Voting and Democracy for ``landslides'' (margin difference above 20\%) and ``competitive'' (margin under 10\%) \cite{2003_DubiousDemocracy}), and one seeking to maximize swing districts. In all cases, we fixed the allowable population deviation at $\eps=5\%$ and required all districts to be contiguous. 

\subsection{Data}
\begin{figure}
    \centering
    \includegraphics*[width=0.19\textwidth]{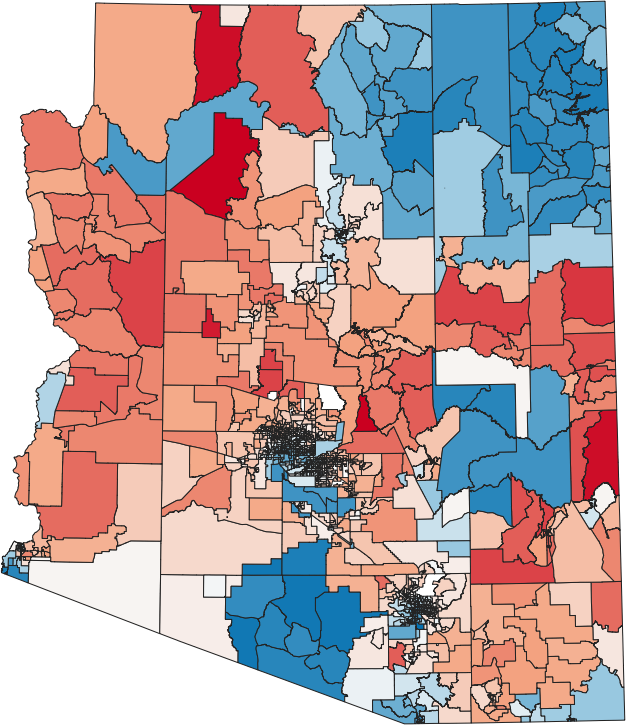}
    \hspace{3em}
    \includegraphics*[width=0.65\textwidth]{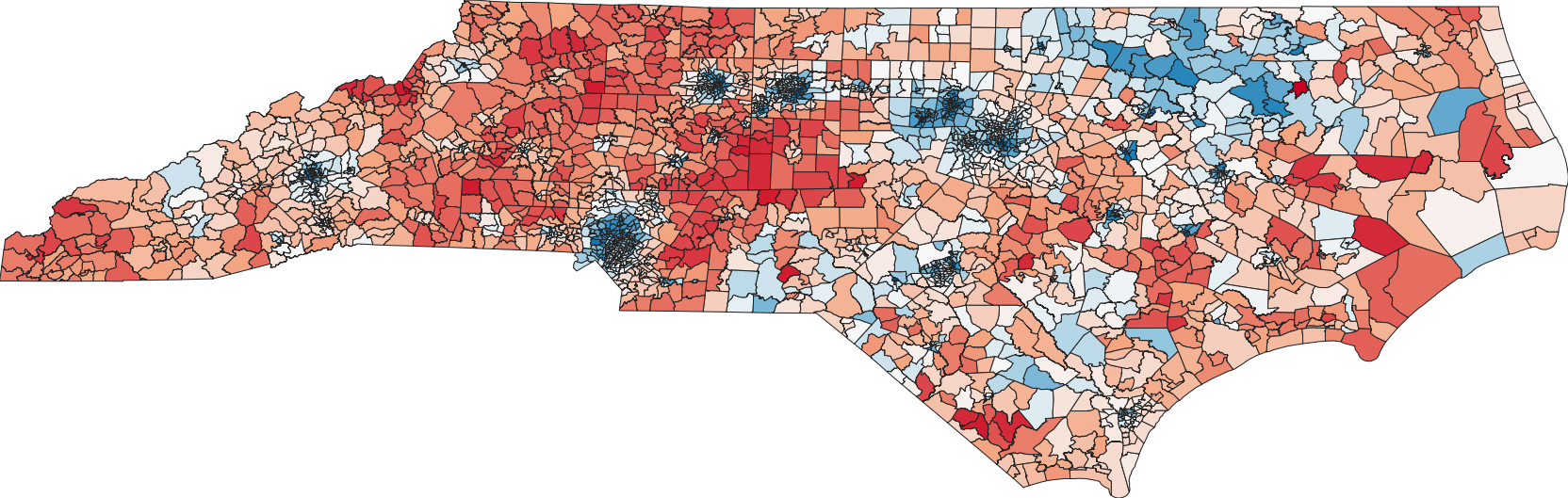}
    \caption{Precinct-level voting data for Arizona and North Carolina. Nodes are colored according to the margin in the 2020 presidential election.}
    \label{fig:AZ_NC_pcts}
\end{figure}

We used precinct-level shapefiles and election data of the U.S. states of Arizona and North Carolina, two medium-sized states that have been highly competitive in recent elections. The geographic data was collected and processed by the Metric Geometry and Gerrymandering Group \cite{_MGGGStates}, and the election data was obtained from the Redistricting Data Hub \cite{_RedistrictingDataHub}. Specifically, the data includes the area, perimeter, and population of each precinct; voting history for the last several elections, and adjacency information. North Carolina has 2,650 precincts and fourteen Congressional districts. Arizona has 1489 precincts and nine Congressional districts. We chose these two states due to their extremely close statewide margins in recent elections, and their moderate size (for reasonable computational burden). They are shown in Figure~\ref{fig:AZ_NC_pcts}.

\paragraph{Voter types}
Evaluating the vote-band competitiveness of a given districting plan depends on the votes cast in a past election. For these experiments, we used the results of the 2020 Presidential election, shown in Fig.~\ref{fig:AZ_NC_pcts}. 
Evaluating whether a given district is swing or not depends on having estimates for the number of reliable voters of each party, and the number of swing voters. The task of estimating voter transitions is a well-studied problem in \emph{Ecological Inference} (EI). We discuss EI and our estimation, using data from the 2012 and 2016 Presidential elections, in Appendix~\ref{apdx:EI}. 

\subsection{Heuristic-based optimization procedure}

In order to find districtings that maximize the number of competitive districts, we use a simple randomized greedy hill-climbing procedure based on repeatedly making ``single node flips'' (also called spin flips) that incrementally improve the plan when measured against some objective(s). These methods are extensively used in ensemble-based analyses of redistricting; see, for example, \cite{CFP_AssessingSignificanceMarkov, CFM_SeparatingEffectSignificance}.\footnote{However, we are \emph{not} doing an ensemble analysis; we will not attempt to sample from a measure or use the Metropolis-Hastings algorithm, as these works do. Rather, we are simply investigating the degree to which a direct optimization can achieve competitive districtings.} 

A \emph{single-node-flip} is a step that flips a node that is on the boundary of two districts from one district to another, subject to some constraints. Specifically, a node $u \in D_j$ is eligible to be flipped to district $D_i$ (where $i\neq j$) if (a) $u$ is adjacent to some $v \in D_i$, (b) $D_j \setminus \{u\}$ remains connected and within population bounds; and (c) $D_i \cup \{u\}$ is within population bounds. 

In each step, we randomly choose a single-node-flip $(u, D_1, D_2)$ from the set of all valid flips, proportional to a score function $J_i$, which quantifies the ``desirability'' of the resulting districts $(D_1 \setminus \{u\}, D_2 \cup \{u\})$, compared to the original districts $(D_1,D_2)$, for particular objectives of interest: 
\[J(u, D_1, D_2) = \exp\left(\sum_{\text{objectives } i} - w_i \bigg(J_i(D_1 \setminus \{u\}) + J_i(D_2\cup \{u\}) - (J_i(D_1) + J_i(D_2))\bigg)\right) \ . \]

Here, low values for $J_i$ are preferred; this weight function weights a flip more heavily the more it decreases the $J_i$ values.\footnote{This type of score function can be considered to be a ``tempered choice'' according to the measure $\pi(D) = \exp\left(\sum_i -w_i J(D_i)\right)$. This form of measure is commonly used in ensemble methods, e.g., in \cite{ACHHM2021_MetropolizedMultiscaleForest, ACHHM2020_MultiScaleMergeSplitMarkov}.}
In this work, our weight includes two score functions: compactness and competitiveness (via swing districts or $\delta$-VBC districts).

We employ the widely-used measure of compactness, the \emph{isoperimetric ratio} \cite{BGHKLMR2017_RedistrictingDrawingLine}, defined as the perimeter squared divided by area. The more compact a district, the lower its isoperimetric ratio, and we define
$J_{iso}(D) = \frac{\text{perimeter}(D)^2}{\text{area}(D)}.$

The vote-band competitiveness term in the score function prioritizes districts that are close to even (i.e., 50-50 margin), and further prioritizes districts where the margin is in the range $\frac12 \pm \delta$: 
\[J_{VBC}(D) = \begin{cases} \left(\frac{PartyA(D)}{Pop(D)} - \frac12\right)^2 & \text{if }\frac{PartyA(D)}{Pop(D)} \not\in \frac12 \pm \delta \ ,\\ 
\frac{1}{16} \left(\frac{PartyA(D)}{Pop(D)} - \frac12\right)^2 & \text{if }\frac{PartyA(D)}{Pop(D)} \in \frac12 \pm \delta \  . \\ 
\end{cases} \]

The swing term in the score function prioritizes having districts where the number of Party A voters plus half the swing voters are close to half of the total population, and further prioritizes districts where neither party's reliable voters comprise more than half of the overall population: 
\[J_{sw}(D) = \begin{cases} \left(\frac12 - \left(\frac12\frac{Swing(D)}{Pop(D)} + \frac{PartyA(D)}{Pop(D)}\right)\right)^2 & \text{if }\frac{PartyA(D)}{Pop(D)} > \frac12 \text{ or } \frac{PartyB(D)}{Pop(D)} > \frac12 \ ,\\ 
    \left(\frac12 - \left(\frac12\frac{Swing(D)}{Pop(D)} + \frac{PartyA(D)}{Pop(D)}\right)\right)^2 \cdot 0.8^2 & \text{if }\frac{PartyA(D)}{Pop(D)} < \frac12 \text{ and } \frac{PartyB(D)}{Pop(D)} < \frac12 \ .\\ 
    
\end{cases}\]

\subsection{Run Parameters}
We ran the hill-climbing procedure for 36,000 steps, restarting from a random initial state every 3,000 steps and taking the most compact plan with the maximal number of competitive districts. We used weights of $w_{iso} = 3, w_{VBC} = 10^5$, for VBC runs and $w_{iso} = 3, w_{sw} = 10^5$ for the Swing runs. We implemented the procedure in Python using the \textsf{gerrychain} package \cite{_GerryChainGerryChainDocumentation}. With unoptimized, single-threaded Python code, the procedure takes about three hours to run on Intel Xeon Gold 6226 2.9 Ghz machines. 

As a basis for comparison, we also ran the hill-climbing procedure for 40,000 steps, restarting every 200 steps, with $w_{iso} = 3, w_{VBC} = 0, w_{sw} = 0$, i.e., prioritizing only compactness. We logged the most-compact plan found in each 200-step interval. We will refer to this set of plans as the \emph{compact sample} below.

\section{Experimental Results} 
\begin{figure}
    \centering 
    \begin{subfigure}{0.31\textwidth}
        \centering 
        \includegraphics*[width=\textwidth]{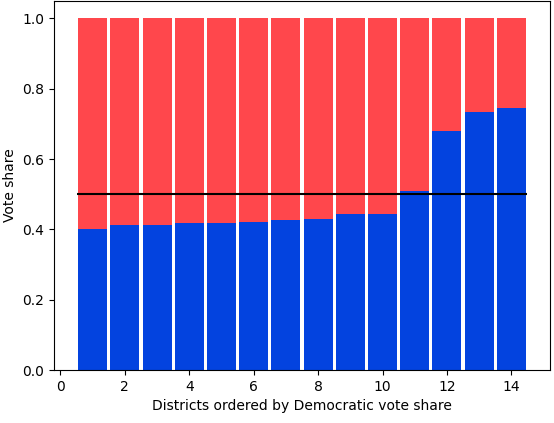}
        \caption{Vote shares of the enacted plan}
    \end{subfigure}
    \hspace{0.5em}
    \begin{subfigure}{0.31\textwidth}
        \centering 
        \includegraphics*[width=\textwidth]{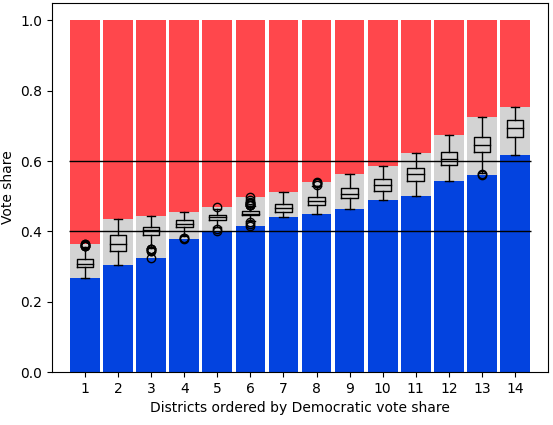}
        \caption{Vote share range of a sample of compact plans}
    \end{subfigure}
    \hspace{0.5em}
    \begin{subfigure}{0.31\textwidth}
        \centering 
        \includegraphics*[width=\textwidth]{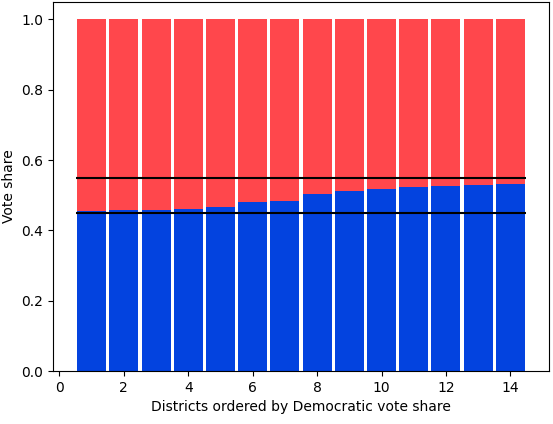}
        \caption{Vote shares of the 5\%-vote-band competitive districting}
    \end{subfigure}
    \caption{Explicitly considering competitiveness allows us to draw plans with significantly fewer safe districts than both the enacted plan and a sample of compact plans.}
    \label{fig:comp_compare}
\end{figure}

We find that the heuristics are extremely effective for constructing districting plans with a significant proportion of swing districts. In fact, we obtain districtings on North Carolina and Arizona where \emph{every single} district is competitive (for 10\%-vote-band, 5\%-vote-band, and swing metrics). 

In Fig. ~\ref{fig:NCresults}, we display the most compact plan for each of $\delta=0.1, 0.05$, and Swing, alongside the current enacted plan, for North Carolina. We also display the vote share of each district: that is, the percentage of Democratic and Republican (and Swing, for the last plot) voters in each district, based on votes cast in the 2020 presidential election. In Fig.~\ref{fig:comp_compare}, we compare the vote share distribution of the 5\%-VBC plan to that of the enacted plan and of the compact sample. 

Notably, the VBC plans exhibit signficantly improved responsiveness: the median district is extremely competitive (49.9\% D-voting) and a small statewide swing in vote share would correspond to a larger change in number of seats won.

\section{Discussion} 

We explicitly do \emph{not} present these plans as examples of \emph{ideal} districting plans. Rather, we present them in contrast to the hardness results presented above, and to explore the consequences of fully prioritizing competitiveness as an objective. Although both swing district and vote-band-competitive district maximization is NP-hard (even if only seeking to make two out of $d$ districts competitive), we explicitly show that it is tractable in practice, even achieving plans that make all $d$ districts competitive. Thus, the point is to dissuade the reader from drawing the conclusion that ``because drawing swing districts is NP-hard, attempting to do so is a lost cause'', or that ``policymakers are exempt from drawing competitive districts because of computational intractibility.'' Instead, drawing competitive districts is very tractable in practice. 

Notably, whereas the enacted plan ``packs'' the heavily-Democratic areas of Charlotte and Raleigh-Durham into as few districts as possible, maximizing competitive districts entails ``cracking'' those areas up into multiple districts. We observe this among all plans that achieved 14 competitive districts; in some sense, this is likely unavoidable given the political geography of the state (in which Democratic voters are heavily concentrated in small geographic areas). 

Indeed, the \emph{ideal} number of competitive districts is almost certainly not the \emph{maximal} number. We leave the question of exactly \emph{how many} districts should be competitive as a normative question (for discussion of the relationship between proportionality and competitive districts, see \cite{DS2022_RedistrictingProportionality}); we simply present the result that it is tractable to achieve anywhere from 0 to $d$ competitive districts on some real-world graphs. 

In particular, while having \emph{every} district in a state be competitive makes the makeup of the state's Congressional delegation highly responsive (that is, small changes in vote share can result in large changes in topline number-of-seats-won), taking this to an extreme (for example, by enacting the plan in Fig.~\ref{fig:NCresults}(c)) can result in a sharp decline in the proportionality of the results. For example, under the plan in Fig.~\ref{fig:NCresults}(c), if one party won 55\% of the votes statewide, they would sweep \emph{all} of the congressional districts - certainly sending an unrepresentative delegation to represent the state.

We show similar results for Arizona in Appendix~\ref{apdx:AZ}. We do not expect all states to admit fully-competitive districtings; indeed, states with remarkably homogeneously distributed electorates (such as Massachusetts) have been observed to be impossible to draw competitive districts on \cite{DGHKNW_LocatingRepresentationalBaseline}. On the other extreme, some states may have voters that are so geographically polarized that drawing competitive districts may require splitting uban areas into unacceptably many districts. For example, Democratic voters in Pennsylvania and Illinois are so heavily concentrated in cities like Philadelphia and Chicago that maximizing competitive districts likely involves splitting them into over a dozen districts - likely an unacceptable result. We leave detailed investigation of these cases for future work.

\begin{figure}
    \begin{subfigure}{\textwidth}
        \centering 
        \includegraphics*[width=0.65\textwidth]{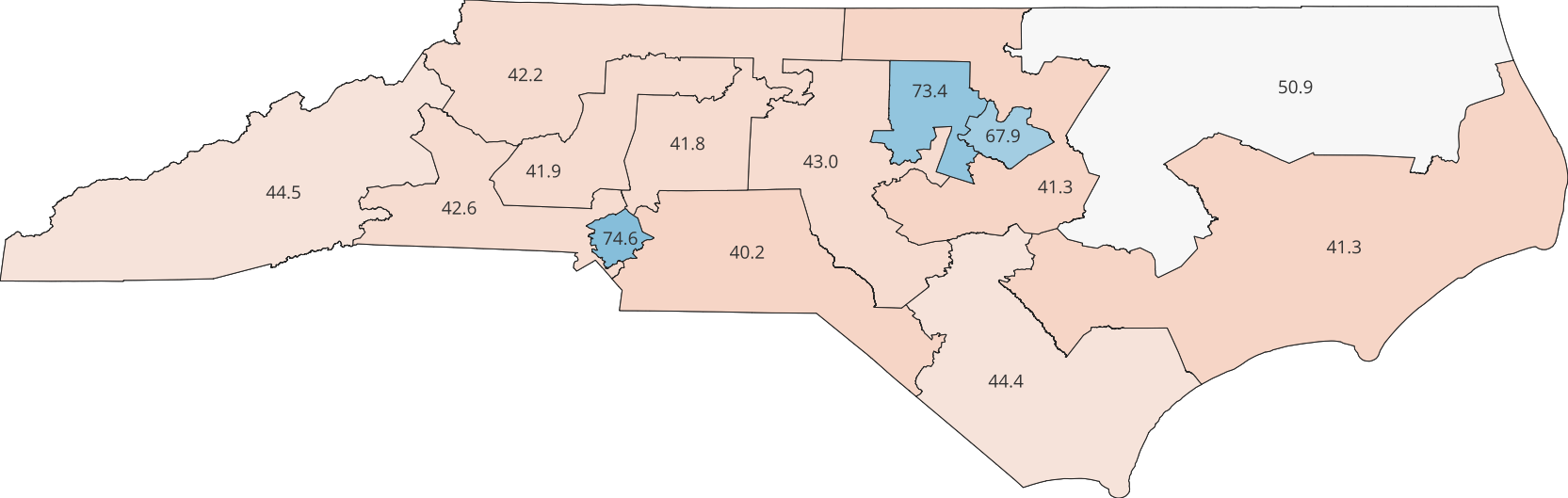}
        \includegraphics*[width=0.3\textwidth]{images/bars/NC_enacted.png}
        \caption{The currently-enacted plan. Despite winning just over half of the two-party vote, Trump would have carried ten of the fourteen Congressional districts.}
    \end{subfigure}

    \begin{subfigure}{\textwidth}
        \centering 
        \includegraphics*[width=0.65\textwidth]{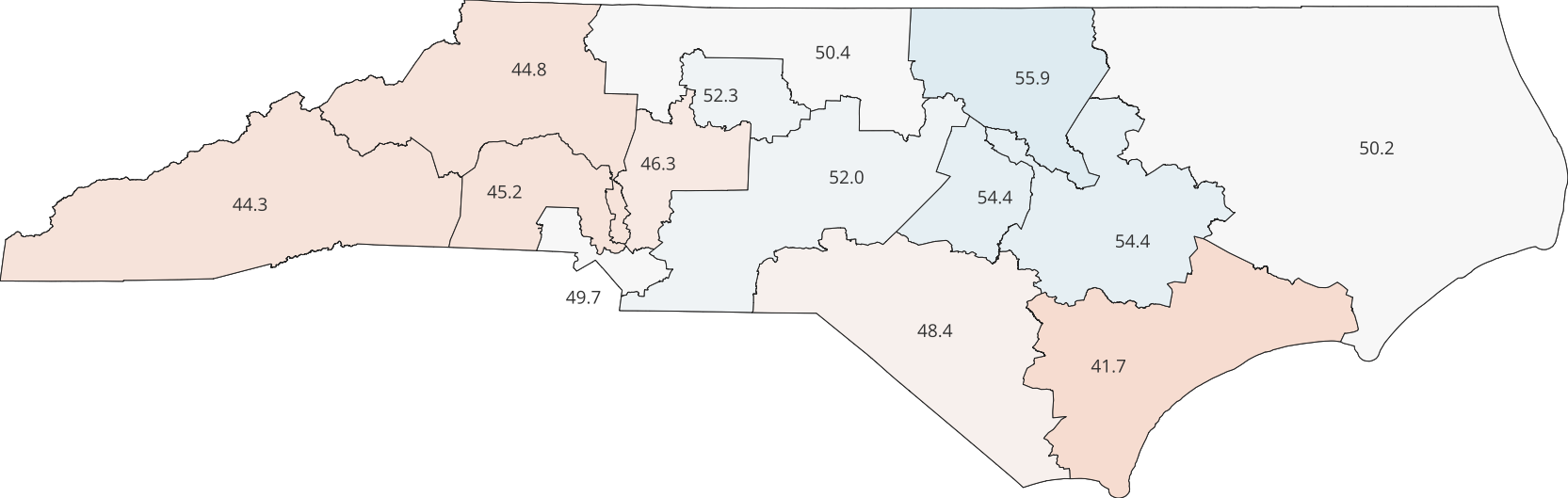}
        \includegraphics*[width=0.3\textwidth]{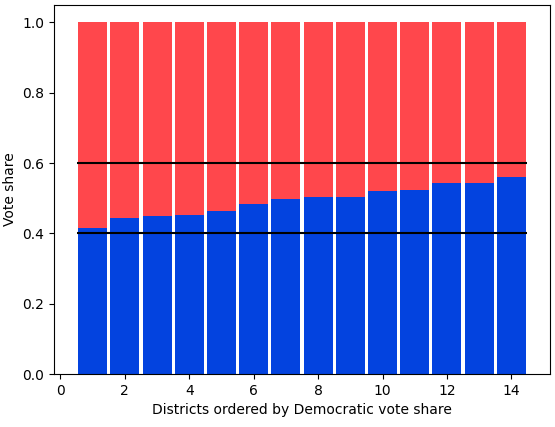}
        \caption{In this plan, all districts are $\delta$-Vote-Band competitive for $\delta=10\%$. That is, all districts have a Biden vote share between 40\% and 60\%. Note that Biden and Trump would have each carried seven of the fourteen districts: a result that is significantly more reflective of the fact that they won nearly the same number of votes.}
    \end{subfigure}

    \begin{subfigure}{\textwidth}
        \centering
        \includegraphics*[width=0.65\textwidth]{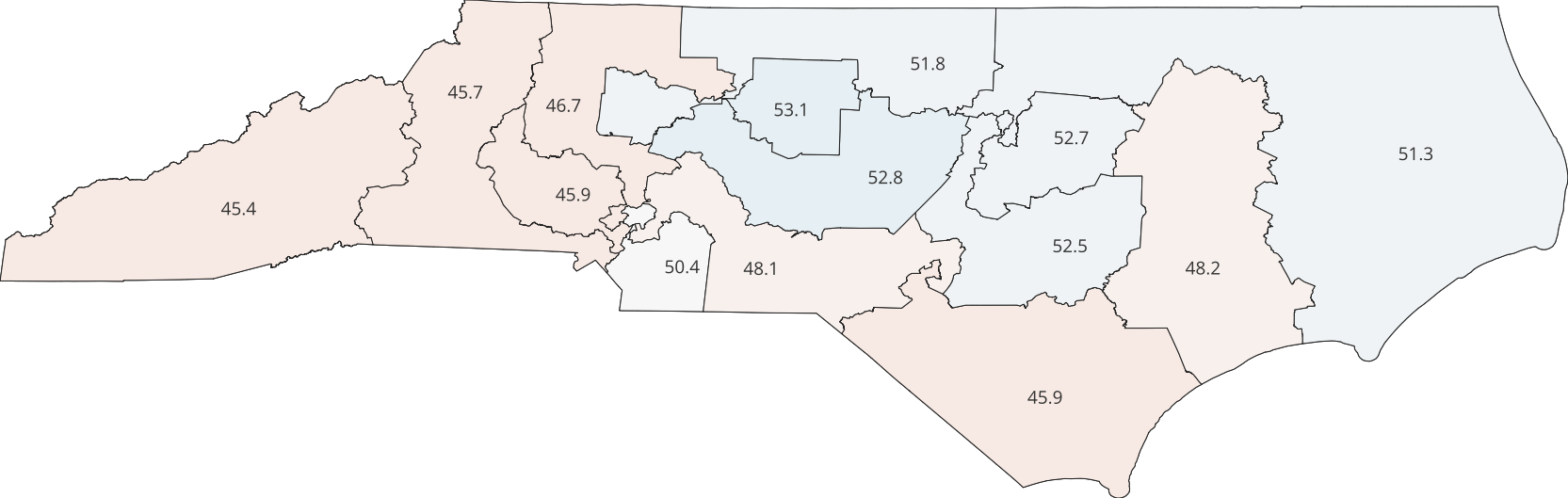}
        \includegraphics*[width=0.3\textwidth]{images/bars/NC_5.png}
        \caption{In this plan, all districts are $\delta$-Vote-Band competitive for $\delta=5\%$: all districts have a Biden vote share between 45\% and 55\%. Again, Biden and Trump would have each carried seven districts. However, the districts are noticeably less compact, with significantly contorted boundary shapes; the heavily Democratic areas of Charlotte and Raleigh-Durham are visibly ``cracked'' among many districts.}    
    \end{subfigure}

    \begin{subfigure}{\textwidth}
        \centering
        \includegraphics*[width=0.65\textwidth]{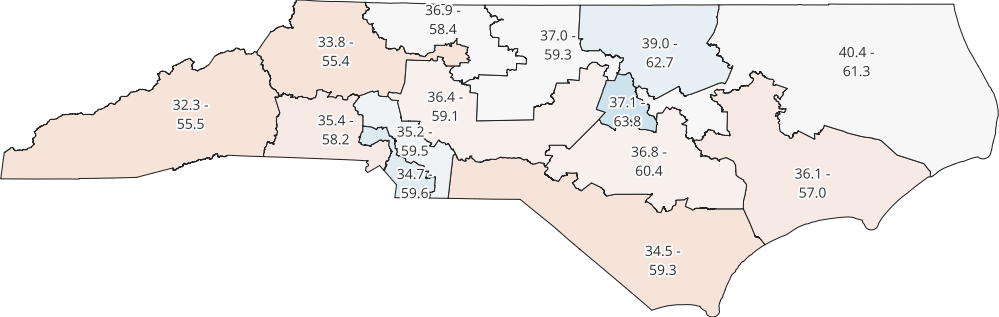}
        \includegraphics*[width=0.3\textwidth]{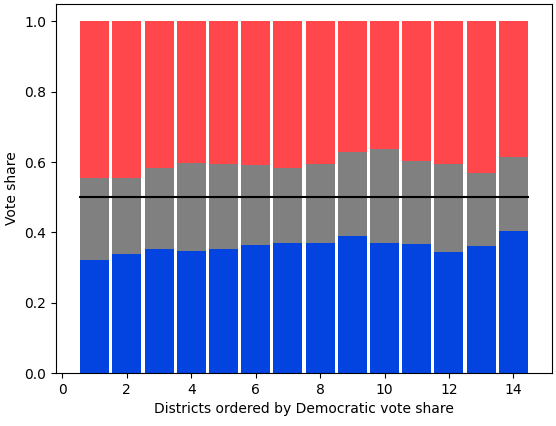}
        \caption{In this plan, all districts are swing. The range of outcomes (ranging from all swing voters voting for Trump to all swing voters voting for Biden) is shown for each district.}    
    \end{subfigure}
    \caption{Our simple hill-climbing procedure successfully finds plans for North Carolina where all fourteen districts are competitive.}
    \label{fig:NCresults}
\end{figure}

\section{Conclusion}

We observe a very large gap between the theoretical intractibility of drawing competitive districts (even on fairly natural instances) and the high performance of empirical heuristics on real instances. This is consistent with the literature on optimizing various metrics in redistricting, as well as with the fact that population-balanced districting itself is clearly achievable in reality while being complexity-wise infeasible in the worst case. We attribute the tractability on real instances to the fact that ``close to optimal'' is an acceptable substitute for ``truly optimal'' in the context of elections, where a large amount of variation and uncertainty is to be expected. 

\newpage

\section{Acknowledgements}

This analysis was conducted using data from the Redistricting Data Hub.

\bibliographystyle{abbrv}
\bibliography{bibliography2}

\newpage

\appendix

\section{Further Proofs}
\label{apdx:proofs}
\subsection{Hardness of Fixed-Proportion Subset}

Here, we present the proof for Theorem~\ref{thm:FPSP-hard}. We reduce from Subset Sum, which is NP-hard even when the inputs can be positive or negative and the desired sum is 0 \cite{KT2006_AlgorithmDesign}.
\begin{proof}
Fix any $\delta \in [0, \frac12)$. Given an instance of Subset Sum $T = [t_1, \cdots t_{n-1}]$ with desired sum 0, let $T' = [t_1, \cdots t_{n-1}, t_n]$, where $t_n = - \sum_{i=1}^{n-1} t_i$, so that $\sum T' = 0$. 

We construct an instance of FPSP($\delta$) as follows. Let $c = -\frac{\min t_i}{\frac12 + \delta}$. Our instance of FPSP will be 
\[S = [(a_1, b_1) \cdots (a_{n+1}, b_{n+1})] \text{ where } a_i = t_i + \left(\frac12 + \delta\right)c \text{ and } b_i = c-a_i \ . \]

Each ``bin'' will have $c$ units total, satisfying the second condition of FPSP. We can verify the first condition holds, using the fact that $\sum t_i=0$:   
\[ \sum_{i=1}^n a_i = \sum_{i=1}^n \left(t_i + \left(\frac12 + \delta\right)c\right) = \sum_{i=1}^n t_i + \left(\frac12 + \delta\right) \sum_{i=1}^n c = 0 + \left(\frac12 + \delta\right) \sum_{i=1}^n (a_i + b_i)\ . \] 

Now, suppose that $S_1, S_2 \subsetneq [n]$ is a solution to FPSP($\delta$) for this instance $S$. Without loss of generality, suppose $n \not\in S_1$. We claim that $\{t_i : i \in S_1 \}$ is a solution to the original Subset Sum instance $T$: 
\[ \sum_{i \in S_1} a_i = \left(\frac12 + \delta\right) \sum_{i\in S_1}(a_i + b_i)\ \implies 
    \sum_{i \in S_1} \left(t_i + \left(\frac12 + \delta\right)c\right) = \left(\frac12 + \delta\right) \sum_{i\in S_1} c \implies 
    \sum_{i \in S_1} t_i = 0   \] 
\end{proof}

\subsection{Hardness of Swing Competitiveness}
\label{apdx:proof_sw}

In this section, we prove Theorem~\ref{thm:SwingMax-hard}, which we restate here for convenience: 

\SwingMaxHard* 

The proof is nearly identical the the proof of Theorem~\ref{thm:VBC-hard}. 

\begin{proof}
    Fix $\eps$ and let $\delta=0$. Let $S = [(a_1, b_1) \cdots (a_n, b_n)]$ be an instance of FPSP(0), and let $Z = \sum_{i=1}^n (a_i + b_i)$. Our redistricting instance (``state'') for Swing-Max($\eps$) will be a $3\times (n+2)$ grid of cells $C_{i,j}$, where: 
    \begin{itemize}
        \item All cells in the first and third row have zero population: $pop(C_{1,\cdot}) = pop(C_{3, \cdot}) = 0$.
        \item For second-row cells $C_{2,i}$ for $i=1, 2, \cdots n$, 
        \[pop(C_{2,i}) = (a_i + b_i) \qquad Party_A(C_{2,i}) = a_i  \qquad Swing(C_{2, i}) = b_i \]
        \item Finally, for second-row cells $C_{2,j}$ for $j = n+1, n+2$, 
        \[pop(C_{2,j}) = \frac12 P, \text{ where } P= \frac{\frac12 + \eps}{\frac12 - \eps} \left(1 + Z\right)\] 
        \[Party_A(C_{2,j}) = \frac14 P \qquad  Swing(C_{2,j}) = \frac14 P\]
    \end{itemize}

    The total population of the state is 
    \[ Pop_{total} = 0 + Z + 2 \left(\frac12 P\right) = Z+P = \frac{\frac12 - \eps + \frac12 + \eps}{\left(\frac12 - \eps\right)}Z +\frac{\frac12 + \eps}{\left(\frac12 - \eps\right)} = \frac{1}{\frac12 - \eps}Z + \frac{\frac12 + \eps}{\frac12 - \eps} \]
    
    Let $\{D_1, D_2\}$ be a $\epsilon$-valid Swing districting on this state. 
    
    The following observations still hold, from the proof of Theorem~\ref{thm:VBC-hard}:

    \textbf{Observation 0}: One can district this instance in polynomial time.
    
    \textbf{Observation 1}: Cells $C_{2,n+1}$ and $C_{2,n+2}$ must be assigned to different districts. Without loss of generality, let $C_{2, n+1} \in D_1$.
        
    \textbf{Observation 2}: Each district must contain at least one of $C_{2,i}$ for $i \in [n]$. 
    
    Claim: $S_1 = \{i : C_{2, i} \in D_1\}, S_2 = \{i : C_{2, i}\in D_2\}$ is a valid solution to the FPSP($\eps$) instance. 
    
    For $S_1, S_2$ to be a solution to FPSP($\eps$), we must show (a) both are nonempty (which follows from Observation 2 above), and (b) $\sum_{i \in S_1} a_i = \left(\frac12 + \delta\right) \sum_{i \in S_1} (a_i + b_i)$.
    
    $\frac12$ of the overall population is Party A voters: 
    \begin{align*} \frac{\text{total }Party_A}{\text{total population}} &= \frac{\sum_{i=1}^n a_i + \left(\frac12\right)\frac{\frac12+\eps}{\frac12-\eps}(1+Z)}{\sum_{i=1}^n (a_i + b_i) + \frac{\frac12+\eps}{\frac12-\eps}(1+Z)} \\ 
    &= \frac{(\frac12)\left(\sum_{i=1}^n (a_i+b_i) +\frac{\frac12+\eps}{\frac12-\eps}(1+Z)\right)}{\sum_{i=1}^n (a_i + b_i) + \frac{\frac12+\eps}{\frac12-\eps}(1+Z)} = \frac12 \end{align*}
    
    Since both districts' margins are exactly $\frac12$, each one must have \emph{exactly} $\frac12$ Party A voters. So, 
    \begin{align*}
        \frac12 &= \frac{Party_A(D_1)}{Party_A(D_1) + Party_B(D_1)} \\ 
        &= \frac{\left(\frac12\right) \frac12 P + \sum_{i \in S_1} a_i}{\frac12 P + \sum_{i \in S_1} (a_i + b_i)}\\ 
        \left(\frac12\right) \left(\frac12 P + \sum_{i \in S_1} (a_i + b_i)\right) &= \left(\frac12 \right)\frac12 P + \sum_{i \in S_1} a_i\\ 
        \left(\frac12\right) \frac12 P + \left(\frac12\right)\sum_{i \in S_1} (a_i + b_i) &= \left(\frac12\right)\frac12 P + \sum_{i \in S_1} a_i\\ 
        \sum_{i \in S_1} a_i &= \left(\frac12\right) {\sum_{i \in S_1} (a_i + b_i)}
    \end{align*}
    
    as required. Thus, $S_1, S_2$ are a valid solution to the FPSP instance. 
\end{proof}

\subsection{Hardness for an arbitrary number of districts}
\label{apdx:proof_dgeq2}

In this section, we prove Corollary~\ref{cor:dgeq2}: 

\MoreDistricts* 

\begin{proof}
    We reduce from an instance with $d=2$. Fix $d', k$ with $2 \leq k \leq d'$. We simply add $d'-2$ cells to the instance. Each cell will have the maximal allowable district population. $k-2$ of the districts will have half Party A voters and half Party B voters (which makes that singleton district competitive under either Swing or $\delta$-VBC definitions), and the remainder will be entirely Party A voters (i.e., uncompetitive). 

    Any valid districting with $k$ competitive districts must have the added cells as singleton districts, of which $k-2$ will be competitive. The other two districts must be a competitive districting of the original instance. 
\end{proof}

\section{Ecological Inference for Estimating Swing Voters} 
\label{apdx:EI}

The task of estimating voter transitions from election data is a well-studied problem in Ecological Inference (EI). Given only top-level voting information (i.e., the number of votes cast for each candidate in two subsequent elections), the task of finding the number of voters who switched their vote from one election to the next (or who voted in one election and not the other) is highly underdetermined. As a result, EI techniques have no worst-case guarantees; however, they have been shown to perform well in practice. For an overview, see \cite{KTMSK2016_EstimationVoterTransitions}. 

The task is made easier by the fact that we have the marginals (i.e., total votes cast) for each precinct in the state. \textsf{nslphom} is a multi-iteration Linear Programming technique that takes advantage of this fact, using statewide homogeneity assumptions, developed by Pavia et al. \cite{PR2022_ImprovingEstimatesAccuracy}. It is available as an R package, which we used to estimate the inner cell values of the $3\times 3$ tables of the form shown in Fig.~\ref{fig:EI} (one table per precinct). For example, $71.30$ is the estimate for the number of voters who voted for the Democratic candidate in 2012 but did not vote in 2016.

\begin{figure}[h]
    \centering
    \begin{tabular}{|cr|ccc|c|}
        \hline 
            & & \multicolumn{3}{c|}{2016 votes} & \\ 
            & & Democratic & Republican & Nonvote/Other & Total \\ 
        \hline 
        \multirow{3}{*}{\makecell{2012\\votes}} 
            & Democratic    & 407.45 & 9.24 & 71.30 & 488\\ 
            & Republican    & 3.55 & 1583.69 & 73.76 & 1660\\ 
            & Nonvote/Other & 0.00 & 272.07 & 2713.93 & 2986\\
        \hline 
        & Total & 410 & 1864 & 2859 & \\
        \hline
    \end{tabular}
    \caption{The EI table for votes cast in Alamance County, North Carolina, Precinct 01. The inner cell values are estimated using the R package \textsf{nslphom}, using the known marginal values. In this case, the estimated number of swing voters is $3.55 + 9.24 + \frac12(71.30 + 73.76 + 0.0 + 272.07) = 221.36$}
    \label{fig:EI}
\end{figure}

We use the election results of the 2012 and 2016 elections to estimate swing voter counts and the results of the 2020 election to evaluate the final redistricting maps in order to keep estimation and evaluation metrics separate. For each precinct, we compute the final estimate for ``swing voters'' by summing the off-diagonal elements of the EI table, with the entries in a ``Nonvote/Other'' row or column halved. Intuitively, compared to a voter who switches their vote from Party A to Party B, a voter who merely goes from abstaining to voting (or from voting to abstaining) only changes the top-level margin by half as much. 

\section{Arizona Experimental Results}
\label{apdx:AZ}
\begin{figure}[H]
    \begin{subfigure}{\textwidth}
        \centering 
        \includegraphics*[width=0.2\textwidth]{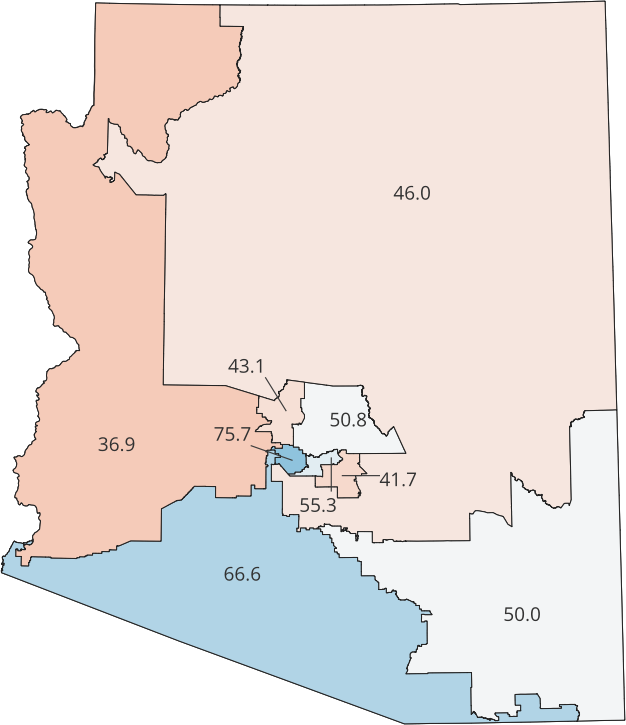}
        \hspace{3em}
        \includegraphics*[width=0.32\textwidth]{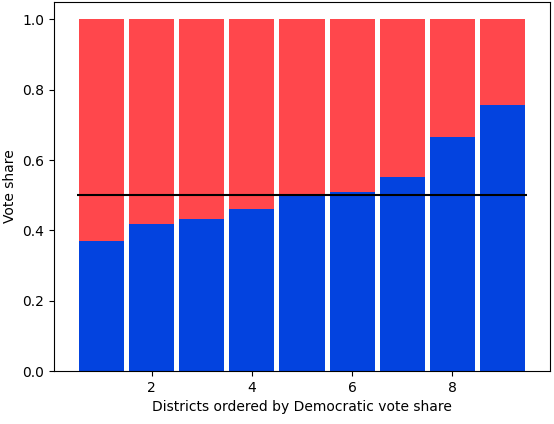}
        \caption{The current enacted plan.}
    \end{subfigure}
    \begin{subfigure}{\textwidth}
        \centering 
        \includegraphics*[width=0.2\textwidth]{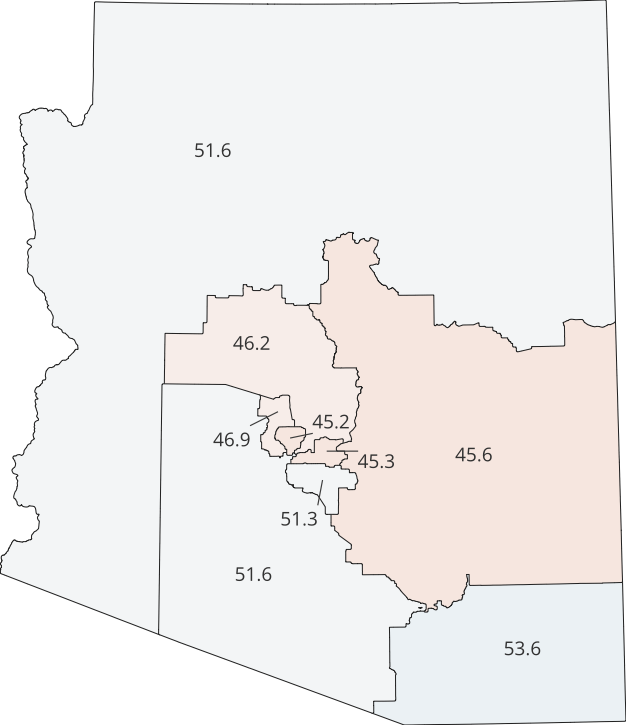}
        \hspace{3em}
        \includegraphics*[width=0.32\textwidth]{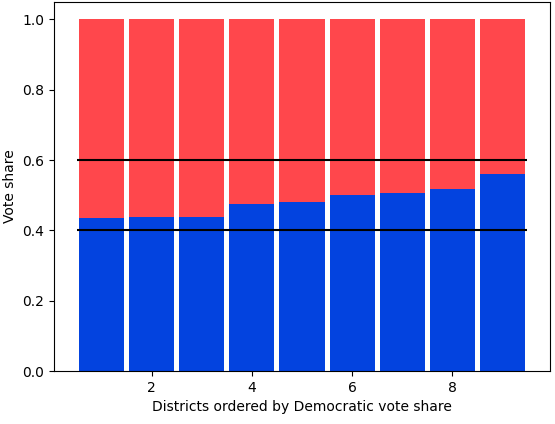}
        \caption{A 0.1-Vote-Band-Competitive districting: all districts have a Biden vote share between 40\% and 60\%.}
    \end{subfigure}
    \begin{subfigure}{\textwidth}
        \centering
        \includegraphics*[width=0.2\textwidth]{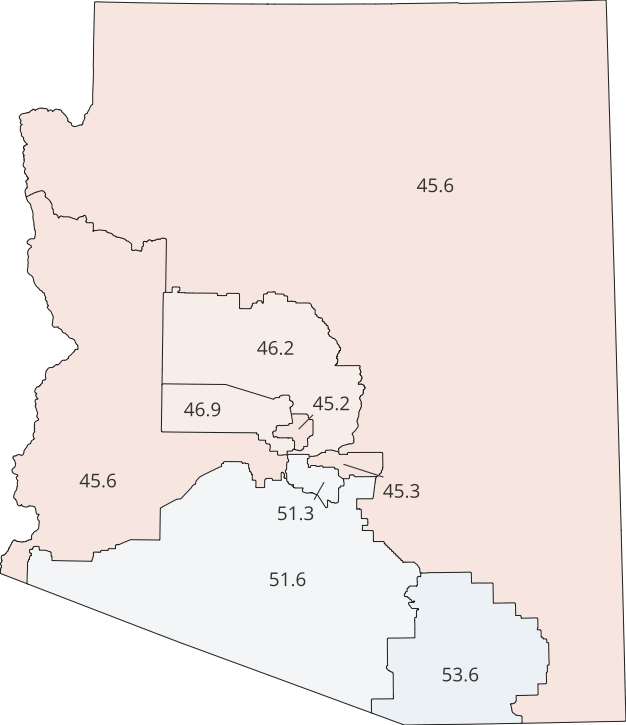}
        \hspace{3em}
        \includegraphics*[width=0.32\textwidth]{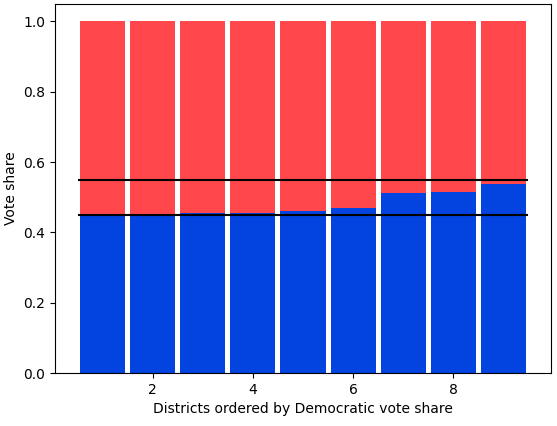}
        \caption{A 0.05-Vote-Band-Competitive districting: all districts have a Biden vote share between 45\% and 55\%. Unlike in North Carolina, this appears to be achievable without significant loss of compactness.}    
    \end{subfigure}
    \begin{subfigure}{\textwidth}
        \centering
        \includegraphics*[width=0.2\textwidth]{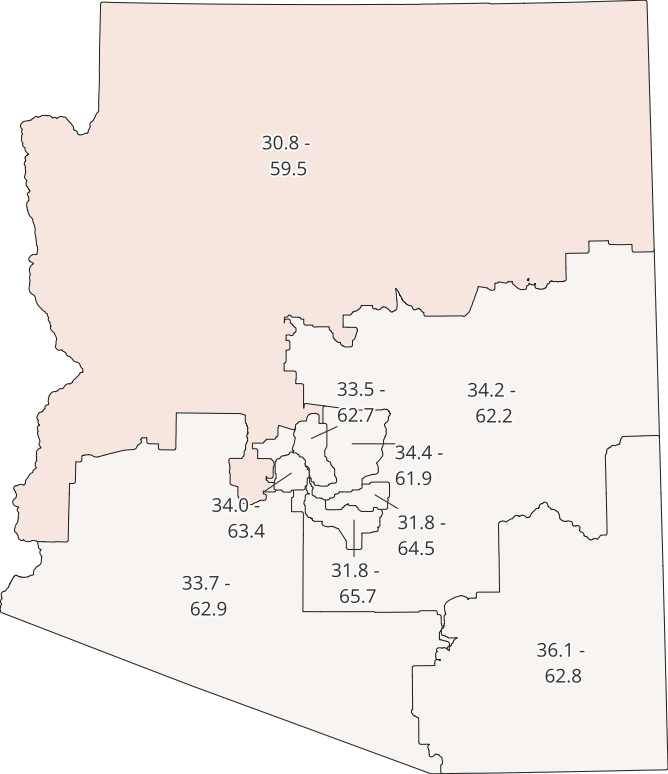}
        \hspace{3em}
        \includegraphics*[width=0.32\textwidth]{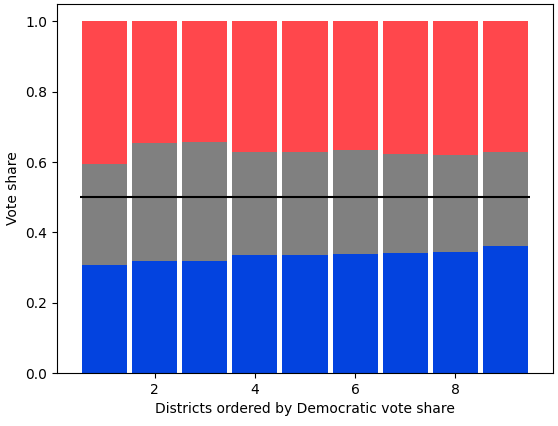}
        \caption{In this plan, all districts are swing. The range of outcomes (ranging from all swing voters voting for Trump to all swing voters voting for Biden) is shown for each district.}    
    \end{subfigure}
    \caption{Our hill-climbing procedure successfully finds plans for Arizona where all nine districts are competitive. Unlike in North Carolina, there is no significant sacrifice in compactness to achieve a $\delta=5\%$-vote-band-competitive districting.}
    \label{fig:AZresults}
\end{figure}

\section{Polynomial Algorithms and Approximations}
\label{apdx:poly}

In this section, we consider special cases of the Swing District Maximization problem in which either the graph $G$ has a special structure (line, bounded degree tree, etc), the population constraint is relaxed, or the districts satisfy an additional structure.

\subsection{Line}
We start by considering the case where $G$ is the line graph on $n$ vertices $c_1, \ldots, c_n$, such that there is an edge between $c_i$ and $c_{i+1}$ for $1 \leq i \leq n-1$. Satisfying the connectivity constraint in this case is easy; simply make sure that every districts has only consecutive cells $c_j, c_{j+1}, \ldots, c_{i}$. A district is feasible if it's connected and satisfies the population constraints. We  can find the optimal $d$-districting by solving a dynamic program. For fixed $d$, $i \in \{1,\ldots,n\}$ and $k \leq d$, let $M(i,k)$ denote the maximum number of swing districts among all valid $d$-districtings of the subgraph induced by $c_1, \ldots, c_i$. We show that 
\begin{equation}\label{eq:dp_line}
M(i,k) = \max\limits_{\substack{1 \leq j < i \\ (c_{j+1},\ldots, c_i) \mbox{ feasible}}} \Big\{ M(j,k-1) + \mathbbm{1}\{ (c_{j+1},\ldots, c_i) \mbox{ is swing}\} \Big\},\end{equation} 
and

\begin{align*}
    M(i,0) &= -\infty & & \forall \ i \in \{1,\ldots,n\}  \\ 
    M(i,1) &=
        \mathbbm{1}\{ (c_1,\ldots, c_i) \mbox{ is swing}\} & \mbox{if  } (c_1,\ldots, c_i) \mbox{ feasible, } & \forall \ i \in \{1,\ldots,n\}\\
    M(i,1) &=
        -\infty & \mbox{if  } (c_1,\ldots, c_i) \mbox{ is not feasible, } & \forall \ i \in \{1,\ldots,n\}
\end{align*}

Equation \eqref{eq:dp_line} holds since to get the maximum number of swing districts in a $k$-districting of $c_1, \ldots, c_i$, we need to decide on the cell $c_{j+1}$ that limits the last district from the left. Once the last  district $\{ (c_{j+1},\ldots, c_i)\}$ is fixed, we need to pick the remaining $k-1$ districts from $c_1,\ldots, c_j$. Checking if a subset of cells satisfies population constraints and induces a swing district can be done in $O(n)$ time. Therefore, we can fill the $n \times d$ entries of the matrix $M(i,k)$ top down from left to right. The optimal solution is stored in $M(n,d)$. We have the following lemma.
\begin{lemma}
If $G$ is a line on $n$ cells, we can compute the optimal $d$-districting in $O(n^2d)$ time.
\end{lemma}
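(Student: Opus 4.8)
The plan is to prove the lemma in two parts: correctness of the recurrence for $M(i,k)$, and the $O(n^2 d)$ running time. The key structural fact I would establish first is that on the line graph the contiguity constraint C1 forces every district to be a set of \emph{consecutive} cells $(c_{j+1}, \ldots, c_i)$; hence a $d$-districting is equivalent to a choice of cut points $0 = i_0 < i_1 < \cdots < i_d = n$, where district $\ell$ is the interval $(c_{i_{\ell-1}+1}, \ldots, c_{i_\ell})$. This reduces the problem to optimally placing interval boundaries, which is exactly what the dynamic program does.

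For correctness, I would argue by induction on $k$ that $M(i,k)$ equals the maximum number of swing districts over all $\eps$-valid $k$-districtings of the prefix induced by $c_1, \ldots, c_i$, with the convention that $M(i,k) = -\infty$ when no such districting exists. The base cases $k=0$ and $k=1$ are exactly the stated boundary conditions. For the inductive step, observe that in any valid $k$-districting of $c_1, \ldots, c_i$ the rightmost district is an interval $(c_{j+1}, \ldots, c_i)$ for some $j < i$; deleting it leaves a valid $(k-1)$-districting of $c_1, \ldots, c_j$, and the total swing count is additive across the two pieces. Maximizing separately over the choice of $j$ (restricted to those $j$ for which the last interval is feasible) and, by the inductive hypothesis, over the $(k-1)$-districting of the prefix (whose optimum is $M(j,k-1)$) yields exactly Equation~\eqref{eq:dp_line}. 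The $-\infty$ convention is what makes this bookkeeping clean: any branch that uses an infeasible last interval or an infeasible prefix split contributes $-\infty$ and is discarded by the max, so $M(n,d)$ is finite precisely when a valid $d$-districting exists, and then equals the optimum.

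For the running time, the one point requiring care is that the naive $O(n)$-time feasibility-and-swing check quoted in the text, applied to each of the $\le n$ candidate values of $j$ in every one of the $nd$ table entries, would give only an $O(n^3 d)$ bound. To reach $O(n^2 d)$ I would precompute prefix sums of $Pop_i$, $a_i$, $b_i$, and $s_i$ in $O(n)$ time; with these, the totals $A_j, B_j, S_j, Pop_j$ of any interval $(c_{j+1}, \ldots, c_i)$ are available in $O(1)$, so testing the population bounds C2 and the swing condition $S_j \ge |A_j - B_j|$ takes $O(1)$ per interval. Each entry $M(i,k)$ is then a maximum over at most $n$ predecessors, each evaluated in constant time, for $O(n)$ work per entry and $O(n^2 d)$ over the whole $n \times d$ table; recovering the optimal districting from stored argmaxes is a standard backtrace that does not affect the asymptotic bound. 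I do not expect a genuine obstacle here, since this is a standard interval-partition dynamic program; the only things to get right are the handling of infeasibility via the $-\infty$ convention and the use of prefix sums to keep each interval test at $O(1)$ rather than $O(n)$.
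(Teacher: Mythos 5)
Your proof is correct and takes essentially the same approach as the paper: the identical interval dynamic program, the same recurrence \eqref{eq:dp_line}, the same base cases with the $-\infty$ convention, and correctness by peeling off the rightmost district $(c_{j+1},\ldots,c_i)$. The one point where you go beyond the paper is worth keeping: the paper only asserts an $O(n)$ per-interval feasibility-and-swing check and then claims the $O(n^2 d)$ total, which as written accounts to $O(n^3 d)$; your prefix-sum precomputation, reducing each interval test to $O(1)$, is exactly what is needed to justify the stated bound.
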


\subsection{Bounded-Degree Trees with Districts of Bounded Depth}
In this subsection, we consider bounded-degree trees with the additional assumptions that districts need to have bounded depth, that is, the distance between every two cells in the same district must be less than a parameter $d > 0$. Consider the graph $G$ to be a tree with $n$ vertices and a maximum degree $\Delta$. Let $\eps$ be the population tolerance. We require that in a valid districting, all districts must run for a depth of at most $d$ (i.e., the diameter of every district is less than $d$). \\

Let $\mathcal{D}(v,\eps,d)$ be the set of $\eps$-valid districts with depth at most $d$ that are rooted at v.

\begin{claim}\label{claim:dp_tree}
\[ |\mathcal{D}(v,\eps,d)| \leq 2^{\Delta^d}\]
\end{claim}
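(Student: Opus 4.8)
The plan is to prove the bound by a crude counting argument that deliberately throws away the connectivity and $\eps$-validity constraints, retaining only the structural fact that a district rooted at $v$ of bounded depth must live inside a small neighborhood of $v$. Since we only want an upper bound on $|\mathcal{D}(v,\eps,d)|$, such overcounting is harmless, and it is exactly what lets us get a clean degree-only estimate.

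First I would observe that every $D \in \mathcal{D}(v,\eps,d)$ is, in particular, a set of cells containing $v$ whose diameter is less than $d$; in particular every cell of $D$ is within distance $d-1$ of $v$. Writing $B(v,d-1)$ for the set of vertices of $G$ at distance at most $d-1$ from $v$, this says $D \subseteq B(v,d-1)$. Hence every district is one of the subsets of $B(v,d-1)$, giving
\[ |\mathcal{D}(v,\eps,d)| \;\le\; 2^{|B(v,d-1)|}. \]

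Next I would bound $|B(v,d-1)|$ using the maximum-degree hypothesis. Rooting $G$ at $v$ and counting by levels, level $0$ is just $v$, and every vertex has at most $\Delta$ children (at most $\Delta$ for the root, at most $\Delta-1$ otherwise). A one-line induction then shows there are at most $\Delta^k$ vertices at distance exactly $k$ from $v$, so
\[ |B(v,d-1)| \;=\; \sum_{k=0}^{d-1} \bigl(\text{cells at distance } k \text{ from } v\bigr) \;\le\; \sum_{k=0}^{d-1} \Delta^k \;=\; \frac{\Delta^{d}-1}{\Delta-1} \;\le\; \Delta^{d}, \]
the last inequality holding for every $\Delta \ge 2$. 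Substituting this into the previous display yields $|\mathcal{D}(v,\eps,d)| \le 2^{\Delta^d}$, which is the claim.

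There is no serious obstacle here; the argument is short and the only things needing care are bookkeeping. The first is pinning down the depth convention precisely: it is the ``diameter less than $d$'' requirement that forces the radius to be $d-1$, and an off-by-one here would change the exponent to $\Delta^{d+1}$. The second is pushing the geometric sum to the clean closed form $\Delta^{d}$ rather than settling for an $O(\Delta^d)$ estimate. I would also note explicitly that the bound is extremely loose — we discard connectivity, the constraint that $v$ be the root, and both population bounds, all of which only shrink $\mathcal{D}(v,\eps,d)$ — but that looseness is acceptable, since the value of the claim is only that it gives a finite, degree-and-depth-only bound, which is precisely what a subsequent tree dynamic program needs in order to be efficient.
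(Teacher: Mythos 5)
Your proof is correct and takes essentially the same approach as the paper's: bound the number of vertices in the ball around $v$ by $\Delta^d$, observe that every district in $\mathcal{D}(v,\eps,d)$ is a subset of that ball, and conclude the bound $2^{\Delta^d}$ by counting in/out choices. If anything, your bookkeeping is slightly more careful than the paper's one-line argument, which counts vertices at distance at most $d$ (rather than $d-1$) from $v$ and asserts that this count is below $\Delta^d$ --- an inequality that can fail (e.g.\ $\Delta = 3$, $d = 2$) --- whereas your radius-$(d-1)$ estimate $\sum_{k=0}^{d-1}\Delta^k \leq \Delta^d$ for $\Delta \geq 2$ goes through cleanly.
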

\begin{proof}
The number of vertices at a distance less or equal than $d$ from the root $v$ is less than $\Delta^d$. For every one of these vertices and every $D \in \mathcal{D}(v,\eps,d)$, $D$ can either contain the vertex or not.
\end{proof}

When $d = O(1)$ and $\Delta = O(1)$, we propose a polynomial time dynamic program to solve the swing district maximization problem. For a fixed $d$ and $k \leq d$, let $M(v,k)$ denote the maximum number of swing districts in a $\eps$-valid $k$-districting for the subtree of $G$ that is rooted at $v$. To get the optimal districting, we need to first fix the district that $v$ will belong to in $\mathcal{D}(v,\eps,d)$. Because of the depth constraint on districts, the number of possible choices is bounded by $2^{\Delta^d}$. After we fix the district of $v$, we need to choose the roots of the remaining $k-1$ districts.\\

Let $G(v)$ be the subtree of $G$ that is rooted at $v$. Let $D \in \mathcal{D}(v,\eps,d)$, and let $\mathcal{R}(D)$ be the roots of the subtrees of $G(v) \setminus D$ (See Figure \ref{fig:bounded_tree}). If $|\mathcal{R}(D)| > k-1$, then clearly we cannot assign all the remaining cells to districts. Therefore we need $0 <|\mathcal{R}(D)| \leq k-1$. Furthermore, to compute the remaining $k-1$ districts, we need to start from the roots in $\mathcal{R}(D)$ such that, every vertex in $\mathcal{R}(D)$ will give rise to at least one district. To decide how many district every tree rooted at a vertex in  $\mathcal{R}(D)$ needs to have, we assign a number $\ell(u) \in \{1, \ldots, k-1\}$ for every $u \in \mathcal{R}(D)$, such that $\sum\limits_{u \in \mathcal{R}(D)} \ell(u) = k-1$, and the subtree rooted at $u$ contains $l(u)$ out of the remaining $k-1$ districts. This gives rise to the following dynamic program

\begin{align*}
    M(v,k) & =   \max\limits_{\substack{D \in \mathcal{D}(v,\eps,d)\\ |\mathcal{R}(D)| \leq k-1 \\ \sum\limits_{u \in \mathcal{R}(D)} \ell(u) = k-1 }}\Big\{ \mathbbm{1}\{D \mbox{ is swing}\} + \sum\limits_{ u \in \mathcal{R}(D)} M(u,\ell(u))\Big\}
\end{align*}

If $r$ is the root of $G$, the $M(r,d)$ will contain the optimal number of swing districts. In order to get $M(r,d)$, we need to top-down fill $O(nk)$ entries of $M$. To fill an entry $M(v,k)$, we have to choose a district $D \in \mathcal{D}(v,\eps,d)$ and an assignment $u\mapsto \ell(u)$ for $u \in \mathcal{R}(D)$ such that $\sum\limits_{u \in \mathcal{R}(D)} \ell(u) = k-1$. 

\begin{claim}\label{claim:R(D)}
Once a district $D \in \mathcal{D}(v,\eps,d)$ is fixed, the number of possible assignments $u\mapsto \ell(u)$ for $u \in \mathcal{R}(D)$ such that $\sum\limits_{u \in \mathcal{R}(D)} \ell(u) = k-1$ is less than $k^{ 2^{{\Delta}^d}}$.
\end{claim}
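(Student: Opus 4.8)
The plan is to prove the bound by first discarding the sum constraint and counting the assignments as arbitrary functions. The quantity to bound is the number of maps $\ell \colon \mathcal{R}(D) \to \{1, \ldots, k-1\}$ satisfying $\sum_{u \in \mathcal{R}(D)} \ell(u) = k-1$. Any such assignment is in particular \emph{some} function from $\mathcal{R}(D)$ to the $(k-1)$-element set $\{1, \ldots, k-1\}$, and there are exactly $(k-1)^{|\mathcal{R}(D)|}$ of these. Hence the number of valid assignments is at most $(k-1)^{|\mathcal{R}(D)|}$, and since $k-1 < k$ this is strictly less than $k^{|\mathcal{R}(D)|}$. So it suffices to show $|\mathcal{R}(D)| \le 2^{\Delta^d}$, after which the claimed bound $k^{2^{\Delta^d}}$ follows immediately.

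To bound $|\mathcal{R}(D)|$ I would use the tree structure together with the counting already used in Claim~\ref{claim:dp_tree}. Because $D$ is a connected, depth-at-most-$d$ subtree rooted at $v$, every vertex of $D$ lies within distance $d$ of $v$, so $|D|$ is bounded by the number of such vertices, which depends only on $\Delta$ and $d$. By definition each root $u \in \mathcal{R}(D)$ is the unique vertex outside $D$ whose parent (in $G(v)$) lies in $D$; distinct subtrees of $G(v)\setminus D$ have distinct roots, so $\mathcal{R}(D)$ injects into the set of tree-edges leaving $D$. The number of such boundary edges is therefore at most a function of $\Delta$ and $|D|$, hence a function of $\Delta$ and $d$ alone, and this is what I would compare against $2^{\Delta^d}$.

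The one place requiring a little care — and the only real obstacle — is confirming that this degree-and-depth bound on $|\mathcal{R}(D)|$ actually sits below $2^{\Delta^d}$, rather than just below some unspecified constant. The crudest estimate, $|\mathcal{R}(D)| \le \Delta \cdot |D| \le \Delta^{d+1}$, can marginally exceed $2^{\Delta^d}$ for the smallest values of $\Delta$ and $d$, so I would instead count boundary edges directly: a connected subtree on $|D|$ vertices has $|D|-1$ internal edges, leaving at most $(\Delta-2)|D| + 2$ edges to the outside, and this tighter estimate comfortably stays below $2^{\Delta^d}$. Everything else is a one-line counting argument. I would also emphasize that the sole purpose of the bound is to certify that, for constant $\Delta$ and $d$, filling each entry $M(v,k)$ costs only polynomially many (in $k$, and hence in $n$) choices of assignment; consequently any explicit bound of the form $k^{f(\Delta,d)}$ would serve equally well, and the specific constant $2^{\Delta^d}$ is chosen only to match the bound of Claim~\ref{claim:dp_tree}.
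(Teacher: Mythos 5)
Your proposal is correct and takes essentially the same route as the paper: bound $|\mathcal{R}(D)|$ by $2^{\Delta^d}$, then drop the sum constraint and count all maps $\mathcal{R}(D)\to\{1,\ldots,k-1\}$, giving at most $(k-1)^{|\mathcal{R}(D)|} < k^{2^{\Delta^d}}$ assignments. You are in fact more careful than the paper, whose entire justification of the first step is that it follows ``similarly to the proof of Claim~\ref{claim:dp_tree}'' --- i.e., the crude vertex/neighbor count $|\mathcal{R}(D)|\le \Delta\cdot|D|\le \Delta^{d+1}$ that, as you correctly flag, can marginally exceed $2^{\Delta^d}$ for the smallest values of $\Delta$ and $d$; your boundary-edge count $(\Delta-2)|D|+2$ cleanly repairs that corner case without changing the overall argument.
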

\begin{proof}
Similarly to the proof of Claim \ref{claim:dp_tree}, we can show that $|\mathcal{R}(D)| \leq 2^{\Delta^d}.$ The number of  positive assignments $u\mapsto \ell(u)$ such that $1 \leq \ell(u) \leq k-1$ is less than $k^{2^{{\Delta}^d}}.$
\end{proof}

The combination of Claim \ref{claim:dp_tree} and \ref{claim:R(D)} show that every entry $M(v,k)$ can be computed in $O(2^{{\Delta}^d}k^{2^{{\Delta}^d}})$ time given that we already know the previous entries. We therefore have the following lemma.
\begin{lemma}
If $G$ is a tree with a maximum degree $\Delta$, and districts can have at most a depth of $d$, we can compute the optimal $d$-districting in $O(n2^{{\Delta}^d}k^{1+2^{{\Delta}^d}})$ time.
\end{lemma}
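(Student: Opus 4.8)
The plan is to derive the lemma directly from the dynamic program $M(v,k)$ set up above, establishing first that its recurrence is correct and then that the total running time equals the number of table entries times the per-entry cost bounded by Claims~\ref{claim:dp_tree} and~\ref{claim:R(D)}. I would treat $d$ as the depth bound and $k$ as the number of districts carried in the subproblem, so that the answer for the full tree is read off at the root.

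For correctness I would argue by structural induction over the rooted subtree $G(v)$. The key structural fact I would lean on is that $G$ is a tree: once the district $D \in \mathcal{D}(v,\eps,d)$ containing $v$ is fixed, deleting $D$ from $G(v)$ splits the remainder into a forest whose components are exactly the subtrees rooted at the vertices of $\mathcal{R}(D)$, with no edges between distinct components. Because each district must be contiguous, no district of $G(v)\setminus D$ can span two of these components (it would have to pass through $D$), so every $\eps$-valid $k$-districting of $G(v)$ decomposes uniquely into the choice of $D$ together with independent districtings of the $\mathcal{R}(D)$-subtrees, and the swing counts add. The allocation $\ell(u)$ records how many of the remaining $k-1$ districts land in the subtree of $u$; imposing $\ell(u)\ge 1$ and $\sum_{u\in\mathcal{R}(D)}\ell(u)=k-1$ exactly forces every cell to be covered while no district is wasted. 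Maximizing over all $D$ and all feasible $\ell$ then gives the recurrence, with base cases ($v$ a leaf, or $k=1$) handled by directly testing whether the whole subtree is a single feasible swing district, as in the line case.

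For the running time I would simply multiply. The table has one entry for each of the $n$ vertices and each district count, giving $O(nk)$ entries. Filling one entry enumerates the at most $2^{\Delta^d}$ candidate districts $D$ (Claim~\ref{claim:dp_tree}) and, for each, the at most $k^{2^{\Delta^d}}$ positive allocations summing to $k-1$ (Claim~\ref{claim:R(D)}); checking $\eps$-validity and swing-ness of $D$ and summing the already-stored values $M(u,\ell(u))$ adds only polynomial overhead, which is absorbed. Hence each entry costs $O(2^{\Delta^d} k^{2^{\Delta^d}})$ and the total is $O(n\,2^{\Delta^d}\,k^{1+2^{\Delta^d}})$, which is polynomial in $n$ and $k$ whenever $\Delta=O(1)$ and $d=O(1)$.

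I expect the main obstacle to be the correctness step rather than the counting. The delicate part is pinning down the exact bijection between $\eps$-valid $k$-districtings of $G(v)$ and pairs consisting of a district $D\ni v$ and a feasible allocation $\ell$ over $\mathcal{R}(D)$: I must verify both that tree-ness plus contiguity truly make the $\mathcal{R}(D)$-subproblems independent and additive, and that the constraints $|\mathcal{R}(D)|\le k-1$, $\ell(u)\ge1$, and $\sum_u \ell(u)=k-1$ simultaneously prevent leaving cells unassigned and prevent double-counting districts. Once that decomposition is secured, the recurrence and the runtime bound follow immediately from the two claims.
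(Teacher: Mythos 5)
Your proposal is correct and follows essentially the same route as the paper: the same dynamic program $M(v,k)$ over rooted subtrees, the same decomposition into a district $D \ni v$ plus positive allocations $\ell(u)$ over the roots $\mathcal{R}(D)$, and the same runtime accounting via Claims~\ref{claim:dp_tree} and~\ref{claim:R(D)} giving $O(nk)$ entries at cost $O(2^{\Delta^d}k^{2^{\Delta^d}})$ each. In fact your structural-induction argument for why tree-ness and contiguity make the subproblems independent is spelled out more carefully than in the paper, which asserts the recurrence without proof.
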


\begin{figure}
    \centering
    \includegraphics[scale=0.55]{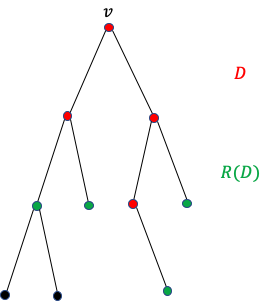}
    \caption{The vertices in red depict the district $D$ that is rooted at $v$. The vertices in green depict $\mathcal{R(D)}$.}
    \label{fig:bounded_tree}
\end{figure}
\subsection{Convexity Districtings of Grid Graphs}
In this subsection, we assume that the graph $G$ is an $m \times n$ grid. If we require all the districts to be $x$-convex, that is, if two cells $c_1$ and $c_2$ of the same row  are assigned to the same district, then all the cells between $c_1$ and $c_2$ of that same row are also assigned to that district. This case encompasses a compactness constraint since convexity has been used in gerrymandering studies as a measure of compactness to examine how redistricting reshapes the geography of congressional districts \cite{CDPAS2018_AlleviatingPartisanGerrymandering}.

\begin{restatable}{thm}
Let $G$ be an  $m \times n$ grid, and let $P = Pop(G)$ be the total
population on $P$. There exists an algorithm for computing an $x$-convex valid $d$-districting of $G$ with maximum swing districts, with running time $(Pm)^{O(d)}$. In particular, the running time is polynomial when the total
population is polynomial and the total number of partitions is a constant.
\end{restatable}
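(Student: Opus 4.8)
The plan is to exploit the rigid geometry forced by $x$-convexity and to set up a sweep dynamic program over the rows of the grid. First I would prove a structural lemma: any district that is both connected and $x$-convex occupies a contiguous band of rows, meets each such row in a single nonempty interval of columns, and has the property that the intervals of two consecutive rows share at least one column (otherwise a path between them would have to use a cell of an intermediate row not in the district, breaking either connectivity or $x$-convexity). Thus each district is a vertical stack of pairwise-overlapping horizontal intervals — a ``staircase''. I would then establish an \emph{ordering observation}: if districts $D$ and $D'$ both appear in rows $i$ and $i+1$ with $D$ left of $D'$ in row $i$, then $D$ is still left of $D'$ in row $i+1$; otherwise the two overlap constraints would force a single column lying in both staircases, contradicting that the districting is a partition. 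Hence the open districts in any row form a left-to-right sequence whose relative order is preserved as we descend.

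These facts let me encode a districting, row by row, by the decomposition of each row into at most $d$ column-intervals together with an order-preserving matching of intervals between consecutive rows. I would sweep the grid one row at a time from top to bottom, maintaining a DP state consisting of: (i) the interval decomposition of the current row, i.e. the at most $d-1$ internal boundaries, each at one of fewer than $m$ column positions; (ii) for each interval, the accumulated triple $(A,B,S)$ of the still-open district it belongs to, each an integer in $\{0,\dots,P\}$; and (iii) the number of districts already committed (to cap the total at $d$). The value stored at a state is the maximum number of already-closed swing districts achievable. A transition guesses the next row's interval decomposition and the order-preserving matching specifying which current intervals continue (into an overlapping interval below), which close, and which intervals below open a new district; when a district closes I test $\varepsilon$-population balance using its total $A+B+S$ (rejecting the branch if violated) and test the swing condition $S \ge |A-B|$ to decide whether to increment the value. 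All row-$1$ intervals open new districts, and after the last row every open district closes and I require the committed count to equal $d$. Prefix sums of the per-cell $a_i,b_i,s_i$ give each interval's row contribution in constant time.

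Correctness follows because the structural and ordering lemmas guarantee a bijection between $x$-convex, connected, $\varepsilon$-valid $d$-districtings and accepting runs of the sweep, so the optimum swing count is read off the final states. For the running time, the number of interval decompositions of a row is at most $m^{O(d)}$, the accumulated triples contribute a factor $P^{O(d)}$, and the committed-count together with the matching choices contribute only $2^{O(d)}$; hence there are $(Pm)^{O(d)}$ states and each transition costs $(Pm)^{O(d)}$, for a total of $(Pm)^{O(d)}$ (the number of rows, polynomial in the input, is absorbed). In particular this is polynomial when $P$ is polynomial and $d=O(1)$.

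The step I expect to be the \textbf{main obstacle} is the structural groundwork, and specifically the ordering observation, since it is exactly what reduces the state from an arbitrary cell labeling (which would blow up to $d^{m}$) to a bounded, order-respecting sequence of intervals. Getting the inter-row matching right — enforcing contiguity via the overlap condition while never merging two distinct staircases or permitting them to cross — is the delicate part; the remaining ingredients (population-balance and swing checks, prefix-sum bookkeeping, counting states) are routine.
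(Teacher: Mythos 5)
Your proposal is correct, and it takes a genuinely different route from the paper's proof. The paper sweeps the grid \emph{column by column} (perpendicular to the direction in which convexity is imposed): its DP state carries, for each of the $d$ district labels, accumulated $(A,B,S)$ counts, together with a labeled contiguous partition of the current column into $d$ (possibly empty) segments and a per-district set of ``forbidden'' rows, which enforces $x$-convexity by guaranteeing that a district can never re-enter a row it has abandoned. That argument rests on the observation that every district meets every column in a single contiguous segment. You instead sweep \emph{row by row}, parallel to the convexity direction, so $x$-convexity itself hands you the interval decomposition of each row; connectivity becomes the overlap condition between consecutive intervals of the same district, and your ordering (non-crossing) lemma is exactly what licenses restricting transitions to order-preserving matchings. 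The trade-off: the paper's labels and forbidden sets make the ``which piece belongs to which district'' bookkeeping trivial, while your route needs the two structural lemmas but then gets contiguity and convexity for free from the staircase picture. Notably, your route is also the more faithful one to the theorem as stated: column-contiguity is \emph{not} implied by $x$-convexity plus connectivity --- consider a C-shaped district occupying all of rows $1$ and $3$ but only the leftmost cell of row $2$, with a second district filling the rest of row $2$; this plan is connected and $x$-convex, yet the first district meets each interior column in two disjoint cells --- so the paper's enumeration implicitly optimizes only over plans that are additionally contiguous within columns, whereas your staircase and ordering lemmas capture all $x$-convex plans. One small slip in your accounting, not affecting the conclusion: under the paper's convention the grid has $n$ columns of $m$ cells each, so a row has $n$ cells and your row-boundary positions number $n$, not $m$; your bound then reads $(Pn)^{O(d)}$ times the number of rows, which is still polynomial in the regime of the ``in particular'' claim ($P$ polynomial, $d = O(1)$), and matches the stated $(Pm)^{O(d)}$ only up to swapping which dimension plays the role of $m$.
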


\begin{proof}
Let $D_1, \ldots, D_k$ be an $x$-convex $d$-districting of $G$. For any
$i \in \{1, \ldots, n\}$, let $C_i$ be the i-th column of $G$. We observe that for all $i \in \{1, \ldots, n\}$, and for all $j \in \{1, \ldots, d\}$, we have that $D_j \cap C_i$ is either empty, or consists of a single rectangle of width 1. Let $\mathcal{C}_i$ be the
set of all \textit{contiguous} partitions of $C_i$ into exactly $d$ (possibly empty) segments, each labeled with a unique integer in
$\{1, \ldots, d\}$. We further define
\begin{align*}
    \alpha_{i,j} & = \mbox{PartyA}\big(D_j^* \cap (C_1 \cup \ldots \cup C_i)\big)\\
    \beta_{i,j} & = \mbox{PartyB}\big(D_j^* \cap (C_1 \cup \ldots \cup C_i)\big)\\
    \gamma_{i,j} & = \mbox{Swing}\big(D_j^* \cap (C_1 \cup \ldots \cup C_i)\big),\\
\end{align*}
For each column $i \in [n]$, and each district $j \in [d]$, $\alpha_{i,j}$ (resp. $\beta_{i,j}$, $\gamma_{i,j}$) denote the number of Party A (resp. Party B, swing) voters in district $D_j$ between column 1 and column $i$.\\

We can enumerate all the possible solutions starting from the first column and moving to the right as follows. For each $i \in \{1, \ldots, n\}$, let $I_i = \mathbb{N}^{3d} \times \mathcal{C}_i \times [m]^{d}$. Let $$X_i = (\alpha_{i,1}, \beta_{i,1}, \gamma_{i,1}, \ldots, \alpha_{i,d}, \beta_{i,d}, \gamma_{i,d}, \Z_i, \F_i) \in I_i,$$

where $\Z_i$ is a $d$-partition of the column $C_i$ and $\F_i = \{F_{i,1}, \ldots, F_{i,d}\}$ is a collection of the forbidden indices for every district in the next column $C_{i+1}$, to ensure the connectivity of the districts as well as the $x$-convexity constraints.\\

If $i = 1$,  we say that $X_i$ is feasible if $\F_1 =\{ \emptyset, \ldots \emptyset\}$ and, for all $j \in \{1, \ldots, d\}$, the set $\Z_1 = \{Z_1, \ldots, Z_{d}\}$  satisfies
\[\alpha_{i,j} =  \mbox{PartyA}\big(Z_{i,j}\big), \quad  \beta_{i,j} = \mbox{PartyB}\big(Z_{i,j}\big), \quad \mbox{and} \quad \gamma_{i,j} = \mbox{Swing}\big(Z_{i,j}\big)\]

If $i > 1$, we say that $X_i$ with $\Z_{i} =  \{Z_{i,1}, \ldots, Z_{i,d}\}$ is feasible if the following holds, there exists some $X_{i-1} = (\alpha_{i-1,1}, \beta_{i-1,1}, \gamma_{i-1,1}, \ldots, \alpha_{i-1,d}, \beta_{i-1,d}, \gamma_{i-1,d}, Z_{i-1}, \F_{i-1}) \in I_{i-1}$ such that 

\begin{gather}
\alpha_{i,j} = \alpha_{i-1,j}+ \mbox{PartyA}\big(Z_{i,j}\big), \quad  \beta_{i,j} = \beta_{i-1,j}+\mbox{PartyB}\big(Z_{i,j}\big), \quad \mbox{and} \quad \gamma_{i,j} = \gamma_{i-1,j} + \mbox{Swing}\big(Z_{i,j}\big), \label{cvx:c2}
\\
 \Z_{i,j} \cap \F_{i-1,j}  = \emptyset, \quad \label{cvx:c3} \\
 \mbox{If } Z_{i,j} \not\subset Z_{i-1,j} \mbox{ then } F_{i,j}  = (Z_{i-1,j} \setminus Z_{i,j}) \cup \F_{i-1,j}, \mbox{ else } F_{i,j} = \F_{i-1,j}\label{cvx:c4}
\end{gather}


The constraint \eqref{cvx:c2} simply states that the voter populations in columns $1,\ldots, i$ are equal to the the voter populations in columns $1,\ldots, i-1$ plus the voters from partition $\Z_i$. Constraint \eqref{cvx:c3} ensure that the partition of the $i$-th column has to respect $x$-convexity and not include any of the forbidden cells from $\F_{i-1}$. The last constraint \eqref{cvx:c4} shows how to update the forbidden cells for the next column $i+1$. If for a district $D_j$, the indices of the rows added from column $i$ to $D_j$ is included in the set of rows of $D_j$ from column $i-1$, then there is no need to add any other forbidden row for the next column. If however, $\Z_{i,j} \not\subset\Z_{i-1,j}$, that means that the column $i$ does not ``transfer'' the rows of $D_j$ from column $i-1$ to column $i$, then the ``non-transfered'' rows have to be forbidden in the next column. \\

For each $i \in \{1, \ldots, n\}$ we inductively compute the set of all feasible $X_i \in I_i$. This can be done
in time $(Pm)^{O(d)}$ where $P$ is the total population of the map. 
\end{proof}

\subsection{MWIS $O(1)$-Approximation for $d = \Theta(n)$}
The previous subsection presents an optimal solution in polynomial time if the districts must satisfy $x$-convexity, the number of districts is constant, and the total population is polynomial in the input. If however, $d$ is no longer constant, then the previous theorem no longer holds. In this subsection, we focus on the case where the number of districts is linear in the number of cells , that is $d = \Theta(n)$. The population constraint (C2) relaxes the ideal $d$-equipartition constraint for a district
$\{D_1, \ldots, D_{d}\}$, that is,
\begin{equation*}
    \forall \ j: Pop(D_j) \in \{ \lfloor Pop(G)/d \rfloor, \lceil Pop(G)/d \rceil \}.
\end{equation*}\\
In this subsection, we relax (C2) to count the number of cells per district instead of population per district, i.e.
\begin{equation*}
    \mbox{(C2-bis)} \ \  \forall \ j: |D_j| = n/d.
\end{equation*}
The constraint (C2-bis) focuses on the number of cells per district as a proxy for the population. This is a realistic relaxation if the population is uniformly distributed across the graph, or if we would like districts to be similar in term of some attribute that is uniform over the cells (area, budget, etc.).\\

If $d = \Theta(n)$ and that we require every district to have exactly $n/d$ cells, we can compute all the possible districts (at most $\binom{n}{n/d} = n^{O(1)}$) and construct a conflict graph $\mathcal{C}$, where nodes represent feasible districts, and two nodes are connected if their corresponding districts share at least one cell. Since each district is required to have exactly $n/d$ cells, then we can show that the conflict graph $\mathcal{C}$ has a special structure, in the sens that no node in $\mathcal{C}$ has more than $n/d + 1$ distinct independent neighbours, i.e., neighbours that do not have any edge in between them.

\begin{lemma}
Let $d = n/d$. The conflict graph $\mathcal{C}$ is $d+1$-claw free, i.e., no node in $\mathcal{C}$ has $d+1$ distinct independent neighbors.
\end{lemma}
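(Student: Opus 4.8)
The plan is a direct pigeonhole argument that exploits the single structural fact provided by constraint (C2-bis): every feasible district, hence every node of $\mathcal{C}$, contains exactly $n/d$ cells. Fix an arbitrary node of $\mathcal{C}$ corresponding to a feasible district $D$, and suppose for contradiction that $D$ has $n/d + 1$ pairwise-independent neighbors $D_1, \ldots, D_{n/d+1}$ in $\mathcal{C}$.

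First I would unpack the two relations defining $\mathcal{C}$ in terms of cells. By construction, two nodes are adjacent exactly when their districts share at least one cell, so the fact that each $D_i$ is a \emph{neighbor} of $D$ gives $D \cap D_i \neq \emptyset$. The assumption that $D_1, \ldots, D_{n/d+1}$ are pairwise \emph{independent} (non-adjacent in $\mathcal{C}$) means $D_i \cap D_j = \emptyset$ for all $i \neq j$; that is, as cell sets they are pairwise disjoint. Combining the two, the sets $D \cap D_1, \ldots, D \cap D_{n/d+1}$ are nonempty (by adjacency) and pairwise disjoint (since the $D_i$ are disjoint, so are their restrictions to $D$), and they all lie inside $D$. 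But $D$ has only $n/d$ cells, so it cannot contain $n/d + 1$ nonempty pairwise-disjoint subsets, a contradiction. Hence no node of $\mathcal{C}$ has $n/d + 1$ pairwise-independent neighbors, which is exactly the claimed claw-freeness.

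I expect no real obstacle: the whole content is the observation that disjoint neighbors must "use up" distinct cells of the central district, so the uniform cell budget caps how many there can be. The only points needing care are the bookkeeping translation between the graph notions (adjacency, independence) and the combinatorial ones (sharing a cell, cell-disjointness), and flagging that it is precisely the equal-size constraint (C2-bis), rather than contiguity or the population bound, that forces a constant claw parameter when $n/d = O(1)$ — which is what will subsequently license the constant-factor local-search approximation for MWIS on $\mathcal{C}$.
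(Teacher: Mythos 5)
Your proposal is correct and is essentially the same argument as the paper's: both proceed by contradiction, noting that adjacency in $\mathcal{C}$ forces each neighbor to intersect the central district $D$ while pairwise independence forces those intersections to be disjoint, so the $n/d$ cells of $D$ cannot accommodate $n/d+1$ nonempty disjoint pieces. Your write-up is, if anything, slightly more careful in spelling out the translation between graph adjacency/independence and cell-sharing/disjointness, but the underlying pigeonhole idea is identical.
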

\begin{proof}
Assume that there is a node $D$ in the conflict graph that has $d+1$ independent neighbours $N_1, \ldots, N_{d+1}$. This implies that the district $D$ shares at least one cell with every district $N_i$ for $i \in \{1,\ldots, \Delta\}$, but $N_i$ and $N_j$ do not share any cell if $i \neq j$. Therefore, there are at least $d + 1$ cells in $D$, one for every intersection $D \cap N_i$. This is a contradiction with the fact that number of cells in every district is equal to $d$.

\end{proof}

Finally, we show how we can formulate the Swing District Maximization problem as a Maximum Weighted Independent Set problem on the conflict graph $\mathcal{C}$. We set the weights of the nodes of $\mathcal{C}$ as follows: $1$ if the district is swing, 0 otherwise. Therefore, a $d$-districting in $G$ with maximum number of swing districts corresponds to a maximum weight independent set in $\mathcal{C}$ with maximum weight. Notice that since the districts in $\mathcal{C}$ have exactly $n/d$ cells, any maximal independent set in $\mathcal{C}$ has exactly $d$ nodes (districts). It is known that we can compute a $d/2$-approximation for Maximum Weight Independent Set in $d$-Claw free graphs \cite{B2000_ApproximationMaximumWeight}. Therefore we have the following result:

\begin{restatable}{thm}
If all districts have exactly $n/d$ cells, and $d = \Theta(n)$, then we can compute $(n/d +1)/2$-approximation to the swing voters problem in polynomial time.
\end{restatable}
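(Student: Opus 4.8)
The plan is to recast the Swing District Maximization problem, under the cell-count relaxation (C2-bis), as a Maximum Weight Independent Set (MWIS) instance on the conflict graph $\mathcal{C}$, and then apply the known approximation guarantee for claw-free graphs. The three enabling ingredients are already in place or easy to obtain: the set of feasible districts is only polynomially large when $n/d = O(1)$, the graph $\mathcal{C}$ is $(n/d+1)$-claw-free by the preceding lemma, and valid $d$-districtings correspond to independent sets of $\mathcal{C}$ that pack all $n$ cells.

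First I would enumerate all feasible districts. Since every district must contain exactly $n/d$ cells and $d = \Theta(n)$ forces $n/d = O(1)$, there are at most $\binom{n}{n/d} = n^{O(1)}$ candidates, each checkable for feasibility in polynomial time. These become the nodes of $\mathcal{C}$, with an edge between two nodes exactly when their districts share a cell, so that independent sets of $\mathcal{C}$ are precisely the collections of pairwise cell-disjoint feasible districts. I then weight each node $1$ if its district is swing and $0$ otherwise, so that the weight of an independent set equals its number of swing districts. With these weights, maximizing swing districts over valid districtings becomes MWIS over the ``full-packing'' independent sets.

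Next I would pin down the correspondence between the two objectives. Any valid $d$-districting is a partition of the cells into $d$ feasible districts, hence an independent set whose weight is its swing count; conversely a \emph{maximal} independent set uses exactly $d$ nodes and therefore covers every cell, yielding a valid districting. The key counting argument is that disjoint $(n/d)$-cell districts that cannot be extended must leave fewer than $n/d$ cells uncovered, and since $n/d$ divides $n$ this forces full coverage. Because any independent set extends to a maximal one by appending only $0$-weight districts, this gives $\mathrm{OPT}_{\mathcal{C}} = \mathrm{OPT}$, the maximum number of swing districts over valid $d$-districtings. By the lemma, $\mathcal{C}$ is $(n/d+1)$-claw-free, so Berman's $\frac{k}{2}$-approximation for MWIS on $k$-claw-free graphs \cite{B2000_ApproximationMaximumWeight}, with $k = n/d+1$, returns an independent set of weight at least $\mathrm{OPT}\big/\frac{n/d+1}{2}$; extending it to a maximal independent set realizes it as a genuine $d$-districting with the same swing count, yielding the claimed $\frac{n/d+1}{2}$-approximation in polynomial time.

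I expect the main obstacle to be the correspondence step, not the invocation of Berman's algorithm. One must argue carefully both that the weighted MWIS objective on $\mathcal{C}$ coincides exactly with ``number of swing districts,'' and that every maximal independent set is a \emph{perfect} packing into exactly $d$ districts, so that the approximate MWIS solution can be instantiated as a valid districting without losing swing districts. This is precisely where the exact-$(n/d)$-cells constraint (C2-bis) does the work: it is what lets the divisibility argument certify full coverage. Relaxing (C2-bis) back to the population balance of (C2) would break this clean packing argument, since leftover cells need not reassemble into feasible districts, and the reduction to MWIS on a claw-free graph would no longer preserve the objective.
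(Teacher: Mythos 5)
Your proposal follows the paper's proof almost exactly: enumerate the polynomially many feasible districts (using $n/d = O(1)$ so that $\binom{n}{n/d} = n^{O(1)}$), build the conflict graph $\mathcal{C}$ with weight $1$ on swing districts and $0$ otherwise, invoke the $(n/d+1)$-claw-freeness lemma, and apply Berman's $\frac{k}{2}$-approximation for MWIS on $k$-claw-free graphs with $k = n/d+1$. The construction, the weighting, and the appeal to the approximation algorithm all coincide with the paper's argument.

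The one place you go beyond the paper is the correspondence step, where you try to \emph{prove} that a maximal independent set in $\mathcal{C}$ covers all $n$ cells and hence yields a valid $d$-districting; the paper merely asserts this (``any maximal independent set in $\mathcal{C}$ has exactly $d$ nodes''). Your instinct to justify it is right, but your divisibility argument fails if feasible districts must be contiguous --- and the paper never drops constraint C1; this appendix only replaces (C2) with (C2-bis). The claim ``a packing that cannot be extended leaves fewer than $n/d$ cells uncovered'' is false under contiguity: the uncovered cells can number $n/d$ or more yet contain no \emph{connected} set of $n/d$ cells. Concretely, take a $1\times 6$ grid with $d=3$, so $n/d = 2$: the packing $\{c_2,c_3\}, \{c_5,c_6\}$ is maximal, since the leftover cells $c_1$ and $c_4$ are not adjacent and so form no feasible district, yet it covers only four of six cells and has $2 < d$ nodes. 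Your argument (and the paper's assertion) is valid only under the reading in which feasibility means nothing more than ``exactly $n/d$ cells,'' i.e., contiguity is dropped as well; in that case every set of $n/d$ uncovered cells is itself a feasible district and your divisibility argument closes the loop. As written, with contiguous districts, there is a genuine gap --- shared by the paper --- in converting the approximate MWIS solution back into a valid $d$-districting: an independent set that cannot be completed to a partition cannot be realized as a districting without possibly discarding some of its swing districts, so both $\mathrm{OPT}_{\mathcal{C}} = \mathrm{OPT}$ and the final rounding step need an additional argument (or an explicit statement that contiguity is not required in this regime).
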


\end{document}